\newif\ifarticle
\algnewcommand{\IfOneLine}[2]{
  \State \algorithmicif\ #1\ \algorithmicthen\ #2}
\newenvironment{protocol}[1][htb]{%
    \renewcommand{\ALG@name}{Protocol}
   \begin{algorithm}[#1]%
  }{\end{algorithm}}
\DeclareMathOperator{\Dist}{Dist}
\DeclareMathAlphabet{\mathpzc}{OT1}{pzc}{m}{it}          
\renewcommand{\mathtt}[1]{{\text {\tt #1}}}
\renewcommand{\mathbf}[1]{{\text {\bf #1}}}
\renewcommand{\mathsf}[1]{{\text {\sf #1}}}
\def\Ttable{\mathbf{T}}
\def\Agents{\mathcal{A}}
\def\Q{\{0,1\}^{\leq 3\log_2 n}}
\def\true{\mathsf{True}}
\def\false{\mathsf{False}}
\newcommand{\resetprotocol}{\textsc{Reset}}
\newcommand{\OriginalSSLE}{\hyperref[algo:original]{\textsc{Silent-n-state-SSR}}}
\newcommand{\propagateresetprotocol}{\hyperref[algo:propag-reset]{\textsc{Propagate-Reset}}}
\newcommand{\silentlinearTimeStateProtocol}{\hyperref[algo:silent-linear-time-state]{\textsc{Optimal-Silent-SSR}}}
\newcommand{\logTimeProtocol}{\hyperref[algo:lineartime_bounded]{\textsc{Sublinear-Time-SSR}}}
\newcommand{\detectcollision}{\hyperref[algo:detectcollision]{\textsc{Detect-Name-Collision}}}
\newcommand{\pathconsistencyprotocol}{\hyperref[algo:pathconsistency]{\textsc{Check-Path-Consistency}}}
\newcommand{\IE}{\mathbb{E}}
\newcommand{\IP}{\mathbb{P}}
\newcommand{\IN}{\mathbb{N}}
\newcommand{\timer}{\mathtt{timer}}
\newcommand{\name}{\mathtt{name}}
\newcommand{\Met}{\mathtt{roster}}
\newcommand{\level}{\mathtt{level}}
\newcommand{\resetcount}{\mathtt{resetcount}}
\newcommand{\resetcountmax}{R_{\max}}
\newcommand{\delay}{\mathtt{delaytimer}}
\newcommand{\delaymax}{D_{\max}}
\newcommand{\errorcountmax}{E_{\max} = \Theta(n)}
\newcommand{\errorcount}{\mathtt{errorcount}}
\newcommand{\depth}{H}
\newcommand{\synccount}{S_{\max}}
\newcommand{\node}{\mathtt{node}}
\newcommand{\edge}{\mathtt{edge}}
\newcommand{\edgetimer}{\mathtt{timer}}
\newcommand{\edgetimercount}{T_H}
\newcommand{\sync}{\mathtt{sync}}
\newcommand{\children}{\mathtt{children}}
\newcommand{\rankself}{\rank_i}
\newcommand{\field}{\mathtt{field}}
\newcommand{\leader}{\mathtt{leader}}
\renewcommand{\rank}{\mathtt{rank}}
\renewcommand{\rankself}{\rank}
\newcommand{\yes}{\mathsf{Yes}}
\newcommand{\no}{\mathsf{No}}
\newcommand{\settled}{\mathsf{Settled}}
\newcommand{\unsettled}{\mathsf{Unsettled}}
\newcommand{\resetting}{\mathsf{Resetting}}
\newcommand{\collecting}{\mathsf{Collecting}}
\newcommand{\inconsistent}{\mathsf{Inconsistent}}
\newcommand{\role}{\mathtt{role}}
\newcommand{\tree}{\mathtt{tree}}
\newcommand{\roleself}{\mathtt{role}}
\begin{document}


\title{Time-Optimal Self-Stabilizing Leader Election in Population Protocols}
\date{}

\ifarticle
    \author[1]{Janna Burman}
    \author[2]{Ho-Lin Chen}
    \author[2]{Hsueh-Ping Chen}
    \author[3]{David Doty}
    \author[1]{Thomas Nowak}
    \author[3]{Eric Severson}
    \author[4]{Chuan Xu}
    \affil[1]{
        Universit\'e Paris-Saclay, 
        {\tt \{janna.burman,thomas.nowak\}@lri.fr}
    }
    \affil[2]{
        National Taiwan University, 
        {\tt \{holinchen,r07921034\}@ntu.edu.tw}
    }
    \affil[3]{
        University of California, Davis, 
        {\tt \{doty,eseverson\}@ucdavis.edu}
    }
    \affil[4]{
        Inria Sophia-Antipolis, 
        {\tt xuchuan898@gmail.com}
    }
\fi

\ifarticle\else
    \begin{abstract}
    We consider the standard population protocol model, where (\emph{a priori}) indistinguishable and anonymous agents interact in pairs according to uniformly random scheduling.
    The \emph{self-stabilizing leader election} problem requires the protocol to converge on a single leader agent from \emph{any} possible initial configuration.
    We initiate the study of time complexity of population protocols solving this problem in its original setting: with probability 1, in a complete communication graph.
    The only previously known protocol 
    by Cai, Izumi, and Wada\ [Theor.\ Comput.\ Syst.~50] runs in expected parallel time $\Theta(n^2)$
    and has the optimal number of~$n$ states in a population of~$n$ agents.
    The existing protocol has the additional property that it becomes silent, i.e., the agents' states eventually stop changing.
    
    Observing that any silent protocol solving self-stabilizing leader election requires $\Omega(n)$ expected parallel time,
    we introduce a silent protocol that uses optimal $O(n)$ parallel time and states.
    Without any silence constraints, we show that it is possible to solve self-stabilizing leader election in asymptotically optimal expected parallel time of $O(\log n)$,
    but using at least exponential states 
    (a quasipolynomial number of bits).
    All of our protocols 
    (and also that of Cai et al.)
    work by solving the more difficult \emph{ranking} problem:
    assigning agents the ranks 
    $1,\ldots,n$.
\end{abstract}



    
\fi

\maketitle

\ifarticle
    
\fi



\section{Introduction}\label{sec:intro}

\emph{Population protocols} \cite{DBLP:journals/dc/AngluinADFP06} are a popular and well established model of distributed computing,
originally motivated by passively mobile sensor networks. 
However, it also models population dynamics from various areas such as 
trust and rumor propagation in social networks \cite{Diamadi2001}, 
game theory dynamics \cite{DBLP:journals/corr/abs-0906-3256}, 
chemical reactions \cite{Gillespie1977,DBLP:journals/nc/SoloveichikCWB08}, 
and gene regulatory networks \cite{BowerB04}. 
Population protocols are a special-case variant of Petri nets and vector addition systems \cite{DBLP:journals/acta/EsparzaGLM17}.

This model considers computational \emph{agents} with no ability to control their schedule of communication. 
They are \emph{a priori} anonymous, indistinguishable, and mobile: interacting in pairs asynchronously and unpredictably.
At each step a pair of agents to interact is chosen uniformly at random.
Each agent observes the other's state,
updating its own according to the transition function.
A \emph{configuration} describes the global system state: the state of each of the $n$ agents.
The sequence of visited configurations describes a particular \emph{execution} of the protocol.
The goal of the protocol is to reach a desired set of configurations with probability~1.

It is common in population protocols to measure space/memory complexity by counting the potential number of states each agent can have.\footnote{
    The base-2 logarithm of this quantity is the standard space complexity measure of the number of bits required to represent each state (e.g., polynomial states = logarithmic space).
}
The model originally used constant-state protocols, 
i.e., the state set is independent of the population size $n$~\cite{DBLP:journals/dc/AngluinADFP06}. 
Recent studies relax this assumption and allow the number of states to depend on $n$, 
adding computational power to the model~\cite{DBLP:conf/icalp/GuerraouiR09,doi:10.2200/S00328ED1V01Y201101DCT006,DBLP:journals/mst/BournezCR18},
improving time complexity~\cite{DBLP:conf/sirocco/Rabie17,DBLP:conf/soda/AlistarhAG18,DBLP:conf/spaa/GasieniecSU19},
or tolerating faults~\cite{cai2012prove,DBLP:conf/icalp/GuerraouiR09,DBLP:journals/tcs/LunaFIISV19}. In the current work, for tolerating any number of  transient faults (in the framework of self-stabilization), such relaxation is necessary~\cite{cai2012prove,DBLP:conf/sirocco/SudoS0KM20} (see details below and Theorem \ref{thm:n-state-lower-bound}).

\paragraph*{Leader election.}
In the \emph{leader election} problem, 
the protocol should reach a configuration $C$ with only one agent 
marked as a ``leader'', 
where all configurations reachable from $C$ also have a single leader. 
When this happens, the protocol's execution is said to have \emph{stabilized}.\footnote{
    Some protocols~\cite{DBLP:conf/soda/GasieniecS18, DBLP:conf/podc/KosowskiU18} stabilize with probability 1, 
    but converge (elect a unique leader)
    long before stabilizing
    (become unable to change the number of leaders).
    In our protocols these two events typically coincide.
}
The time complexity of a protocol is measured by \emph{parallel time}, the number of interactions until stabilization, 
divided by the number of agents $n$.\footnote{
    This captures the intuition that interactions happen in parallel,
    effectively defining the time scale so that each agent participates in $O(1)$ interactions per time unit on average.
}

Leader election is an important paradigm in the design of distributed algorithms useful to achieve a well coordinated and efficient behavior in the network.  
For example, in the context of population protocols, given a leader, protocols can become exponentially faster  \cite{DBLP:journals/dc/AngluinAE08a,DBLP:conf/icalp/BellevilleDS17} or compact (using less memory states) \cite{DBLP:conf/stacs/BlondinEJ18}.
Moreover, some problems, like fault-tolerant counting, naming 
and bipartition become feasible, assuming a leader \cite{Beauquier2007,DBLP:conf/wdag/BurmanBS19,
DBLP:conf/opodis/YasumiOYI17}.

Leader election protocols have been extensively studied in the original setting where all agents start in the same pre-determined state (a non-self-stabilizing case, and in complete interaction graphs, i.e. where any pair of agents can interact).
For example, it was shown that the problem cannot be solved in 
$o(n)$
(parallel) time if agents have only 
$O(1)$
states \cite{DBLP:conf/wdag/DotyS15},
an upper bound later improved to $\leq \frac{1}{2} \log \log n$ states~\cite{DBLP:conf/soda/AlistarhAEGR17}. 
To circumvent this impossibility result, 
subsequent studies assume a non-constant state space, though relatively small (e.g., $O(\log n)$ or $O(\log \log n)$). 
Leader election has recently been shown to be solvable with optimal $O(\log n)$ parallel time and $O(\log \log n)$ states~\cite{DBLP:conf/stoc/BerenbrinkGK20},
improving on recent work meeting this space bound in time $O(\log^2 n)$~\cite{DBLP:conf/soda/GasieniecS18} and $O(\log n \log \log n)$~\cite{DBLP:conf/spaa/GasieniecSU19}.
Another work presents a simpler $O(\log n)$-time, $O(\log n)$-state protocol~\cite{sudo2020logarithmic}.
It may appear obvious that any leader election protocol requires $\Omega(\log n)$ time, but this requires a nontrivial proof~\cite{sudo2020leader}.
There is also an $O(1)$-space and expected 
$O(\log^2 n)$-time
protocol, but 
with a positive error probability; 
and a slower 
$o(n)$-time 
(e.g., $\sqrt{n}$) protocol correct with probability 1~\cite{DBLP:conf/podc/KosowskiU18}.
Recent surveys \cite{DBLP:journals/sigact/AlistarhG18, DBLP:journals/eatcs/ElsasserR18} 
explain the relevant techniques. 

\paragraph*{Reliable leader election.}
The current paper studies leader election in the context of \emph{reliability}.
What if agents are prone to memory or communication errors?
What if errors cannot be directly detected, so agents cannot be re-initialized in response? 
As a motivating scenario one can imagine mobile sensor networks 
for mission critical and safety relevant applications 
{where rapid recovery from faults takes precedence over memory requirements.}
Imagine applications operating on relatively small sized networks, so that the sensors' memory storage dependent on $n$ is not necessarily an issue.
{Additionally, $n$ states are provably required to solve our problem~\cite{cai2012prove} (see Theorem \ref{thm:n-state-lower-bound}).}

We adopt the approach of self-stabilization~\cite{dijkstra,DBLP:conf/opodis/AngluinAFJ05}.
A protocol is called \emph{self-stabilizing} if it stabilizes with probability 1 from an \emph{arbitrary} configuration\footnote{
    %
    For a self-stabilizing protocol, it is equivalent to consider probability 1 and fixed probability $p > 0$ of correctness; See Section~\ref{sec:preliminaries}. 
}
(resulting from any number of transient faults).
Non-self-stabilizing (a.k.a., \emph{initialized}) leader election is easily solvable using only one bit of memory per agent by the single transition $(\ell,\ell) \to (\ell,f)$ from an initial configuration of all $\ell$'s: 
when two candidate leaders meet, one becomes a follower $f$.
However, this protocol fails 
(as do nearly all other published leader election protocols) 
in the self-stabilizing setting from an all-$f$ configuration.
Thus, any self-stabilizing leader election (SSLE) protocol must be able not only to reduce multiple potential leaders to one,
but also to create new leaders.
A particular challenge here is a careful verification of leader absence, to avoid creating excess leaders forever.

Because of this challenge, in any SSLE protocol, 
agents must know the \emph{exact} population size $n$, 
and the number of states must be at least $n$~\cite{cai2012prove} (Theorem \ref{thm:n-state-lower-bound} in the preliminaries section).
Despite the original assumption of constant space, population protocols with linear space 
(merely $O(\log n)$ bits of memory)
may be useful in practice, similarly to distributed algorithms in other models (message passing, radio networks, etc.).  
One may now imagine such memory-equipped devices communicating in a way as agents do in population protocols \cite{DBLP:conf/sensys/PolastreHC04, DBLP:conf/infocom/JohnsonSFFSRL06}. Think of a group of mobile devices (like sensors, drones or smart phones) operating in different types of rescue, military or other monitoring operations (of traffic, pollution, agriculture, wild-life, etc.). Such networks may be expected to operate in harsh inaccessible environments, while being highly reliable and efficient. This requires an efficient ``strong’’ fault-tolerance in form of automatic recovery provided by self-stabilization. 
Moreover, even if one considers only protocols with polylog$(n)$ states interesting, it remains an interesting fact that such protocols cannot solve SSLE.


Finally, self-stabilizing algorithms are easier to compose~\cite{DBLP:journals/dc/DolevIM93,dolev2000self}.
Composition is in general difficult for population protocols~\cite{severson2020composable, chalk2021composable}, since they lack a mechanism to detect when one computation has finished before beginning another.
However, a self-stabilizing protocol $S$ can be composed with a prior computation $P$, which may have set the states of $S$ in some unknown way before $P$ stabilized, 
c.f.~\cite[Section 4]{DBLP:conf/opodis/AngluinAFJ05}, \cite[Theorem 3.5]{amir2020message}.

\paragraph{Problem variants.}
To circumvent the necessary dependence on population size $n$,
previous work has considered relaxations of the original problem.
One approach, which requires agents only to know an upper bound on $n$, 
is to relax the requirement of self-stabilization: 
\emph{loose-stabilization} requires only that a unique leader persists for a long time after a stabilization, 
but not forever~\cite{DBLP:journals/tcs/SudoOKMDL20}. 
Other papers study leader election in more general and powerful models than population protocols,
which allow extra computational ability not subject to the limitations of the standard model. 
One such model assumes an external entity, called an \emph{oracle}, 
giving clues to agents about the existence of leaders~\cite{DBLP:conf/opodis/FischerJ06, DBLP:conf/opodis/BeauquierBB13}.
    Other generalized models include
\emph{mediated population protocols}~\cite{DBLP:journals/dc/MizoguchiOKY12},
allowing additional shared memory for every pair of agents,
and 
the $k$-interaction model~\cite{DBLP:conf/sss/XuYKY13},
where agents interact in groups of size $2$ to $k$.

While this paper considers only the complete graph (the most difficult case),
other work considers protocols that assume a particular non-complete graph topology.
In rings and regular graphs with constant degree, SSLE is feasible even with only a constant state space \cite{chen2019self,chen2020self,DBLP:conf/opodis/AngluinAFJ05,DBLP:conf/sss/YokotaSM20}. 
In another recent related work \cite{DBLP:conf/sirocco/SudoS0KM20}, the authors study the feasibility requirements of SSLE in arbitrary graphs, as well as the problem of \emph{ranking} that we also study (see below).
They show how to adapt protocols in \cite{DBLP:conf/opodis/BeauquierBB13, cai2012prove} 
into protocols 
for an arbitrary (and unknown) connected graph topology (without any oracles, but knowing $n$).

\subsection{Contribution}

\begin{table}
\ifarticle
    \small 
\fi
\caption{
    \small
    Overview of time and space (number of states) complexities of self-stabilizing leader election protocols (which all also solve ranking). 
    For the silent protocols, the silence time also obeys the stated upper bound. 
    Times are measured as parallel time until stabilization both in expectation and with high probability 
    (WHP is defined as probability $1 - O(1/n)$, but implies a guarantee for any $1-O(1/n^c)$, see Section~\ref{sec:preliminaries}).
    Entries marked with * are asymptotically optimal in their class (silent/non-silent);
    see Observation~\ref{obs:linear-time-lower-bound-silent}.
    The final two rows really describe the same protocol \logTimeProtocol; it is parameterized by the positive integer $H$; setting $H = \Theta(\log n)$ gives the time-optimal $O(\log n)$ time protocol.
}
\label{tab:overview}
\centering
\vspace{-0.1cm}
\begin{tabular}{| l | l | l | l | l |}
\hline
    {\bf protocol} 
& 
    {\bf expected time} 
&
    {\bf WHP time}
&
    {\bf states} 
&
    {\bf silent} 
\\ \hline
    \OriginalSSLE~\cite{cai2012prove} 
&
    \hphantom{*} $\Theta(n^2)$ 
&
    \hphantom{*} $\Theta(n^2)$
&
    * $n$ 
&
    yes
\\ \hline
    \silentlinearTimeStateProtocol\ (Sec.~\ref{sec:silent-linear-time-state})
&
    * $\Theta(n)$
&
    * $\Theta(n \log n)$
&
    * $O(n)$
&
    yes
\\ \hline
    \logTimeProtocol\ (Sec.~\ref{sec:log-time-protocol})
&
    * $\Theta(\log n)$ 
&
    * $\Theta(\log n)$
&
    \hphantom{*} $\exp(O(n^{\log n}\cdot \log n))$
&
    no
\\ \hline
    \logTimeProtocol\ (Sec.~\ref{sec:log-time-protocol})
&
    \hphantom{*} $\Theta(H \cdot n^{\frac{1}{H+1}})$ 
&
    \hphantom{*} $\Theta(\log n \cdot n^{\frac{1}{H+1}})$
&
    \hphantom{*} $\Theta(n^{\Theta(n^{H})}\log n)$
&
    no
\\ \hline

\end{tabular}
\vspace{-0.3cm}

\end{table}

We initiate the study of the limits
of time efficiency or the time/space trade-offs for SSLE in the standard population protocol model, in the complete interaction graph.
The most related protocol, of Cai, Izumi, and Wada~\cite{cai2012prove} (\OriginalSSLE, Protocol~\ref{algo:original}), given for complete graphs, uses exactly $n$ states and $\Theta(n^2)$ expected parallel time %
(see Theorem~\ref{thm:original-timebound})%
, exponentially slower than the $\mathrm{polylog}(n)$-time non-self-stabilizing existing solutions~\cite{DBLP:conf/stoc/BerenbrinkGK20, DBLP:conf/soda/GasieniecS18, DBLP:conf/spaa/GasieniecSU19, DBLP:conf/podc/KosowskiU18, sudo2020logarithmic}.
Our main results are two faster protocols, 
each making a different time/space tradeoff.

Our protocols, along with that of~\cite{cai2012prove},
are summarized in Table~\ref{tab:overview}.
These main results are later proven as Theorem~\ref{thm:silent-linear-time} and Theorem~\ref{thm:log-time}.
Both expected time and high-probability time are analyzed. 
Any \emph{silent} protocol 
(one guaranteed to reach a configuration where no agent subsequently changes states) 
must use $\Omega(n)$ parallel time in expectation
(Observation~\ref{obs:linear-time-lower-bound-silent}).
This lower bound has helped to guide our search for sublinear-time protocols, since it rules out ideas that, if they worked, would be silent.
Thus \silentlinearTimeStateProtocol\ is time- and space-optimal for the class of silent protocols.

\logTimeProtocol\ is actually
a family of sublinear time protocols that, depending on a parameter $H$ that can be set to an integer between 1 and $\Theta(\log n)$,
causes the algorithm's running time to lie somewhere in $O(\sqrt{n})$ and $O(\log n)$, while using more states the larger $H$ is;
setting $H = \Theta(\log n)$ gives the time-optimal $O(\log n)$ time protocol.
However, even with $H=1$, it requires exponential states.
It remains open to find a sublinear-time SSLE protocol that uses sub-exponential states.
We note that any protocol solving SSLE requires $\Omega(\log n)$ time:
from any configuration where all $n$ agents are leaders,
by a coupon collector argument, it takes $\Omega(\log n)$ time for $n-1$ of them to interact and become followers. 
(This argument uses the self-stabilizing assumption that ``all-leaders'' is a valid initial configuration; otherwise, for \emph{initialized} leader election, it requires considerably more care to prove an $\Omega(\log n)$ time lower bound~\cite{sudo2020logarithmic}.)

For some intuition behind the parameterized running times for \logTimeProtocol,
the protocol works by detecting ``name collisions'' between agents, communicated via paths of length $H+1$.
For example, $H=0$ corresponds to the simple linear-time algorithm that relies on two agents $s,a$ with the same name directly interacting, i.e., the path $s \to a$.
$H=1$ means that $s$ first interacts with a third agent $b$, who then interacts with $a$, i.e., the path $s \to b \to a$.
To analyze the time for this process to detect a name collision,
consider the following ``bounded epidemic'' protocol. 
The ``source'' agent $s$ that starts the epidemic is in state $0$, and all others are in state $\infty$, and they interact by $i, j \rightarrow i, i+1$ whenever $i < j$. The time $\tau_k$ is the first time some target agent $a$ has state $\leq k$. In other words, this agent has heard the epidemic via a path from the source of length at most $k$.
We have $\IE[\tau_1] = O(n)$, since $a$ must meet $s$ directly. An iterative process can then show $\IE[\tau_2] = O(\sqrt{n})$, and more generally $\IE[\tau_k] = O(kn^{1/k})$.
$\tau_n$ is the hitting time for the standard epidemic process, since the path from any agent to the source can be at most $n$. However, with high probability, the epidemic process will reach each agent via a path of length $O(\log n)$, so it follows that $\tau_{k} = O(\log n)$ if $k = \Omega(\log n)$, so setting $H = \Theta(\log n)$ will detect this name collisions in $O(\log n)$ time. Bounds on $\tau_k$ are given as Lemma~\ref{lem:bounded-epidemic-constant} and Lemma~\ref{lem:bounded-epidemic-log-n}.

\opt{inline}{\paragraph*{Ranking.}}
All protocols in the table 
solve a more difficult problem than leader election:
\emph{ranking} the agents by assigning them the IDs $1,\ldots,n$.
Ranking is helpful for SSLE because it gives a deterministic way to detect the absence of a state (such as the leader state). If any rank is absent, the pigeonhole principle ensures multiple agents have the same rank, reducing the task of absence detection to that of collision detection.

Collision detection is accomplished easily in $O(n)$ time by waiting for the colliding agents to meet, which is done by
\silentlinearTimeStateProtocol. Achieving stable collision detection in optimal $O(\log n)$ time is key to our fast protocol \logTimeProtocol. This collision detection problem is interesting in its own right, see Conclusion.

Ranking is similar to the 
\emph{naming} problem of assigning each agent a unique ``name'' (ID) \cite{DBLP:conf/sss/MichailCS13,DBLP:conf/wdag/BurmanBS19}, but is strictly stronger since each agent furthermore knows the order of its name relative to those of other agents.
Naming is related to leader election:
if each agent can determine whether its name is ``smallest'' in the population,
then the unique agent with the smallest name becomes the leader.
However, it may not be straightforward to determine whether some agent exists with a smaller name;
much of the logic in the faster ranking algorithm \logTimeProtocol\ is devoted to propagating the set of names of other agents while determining whether the adversary has planted ``ghost'' names in this set that do not actually belong to any agent.
On the other hand, 
any ranking algorithm automatically solves both the naming and leader election problems:
ranks are unique names,
and the agent with rank 1 can be assigned as the leader.
(Observation~\ref{obs:leader-election-without-ranking} shows that the converse does not hold.)

\section{Preliminaries}\label{sec:preliminaries}

We 
{write}
$\IN = \{1,2,\ldots\}$ 
and
$\IN_0 = \IN \cup \{0\}$. 
The term $\ln k$ denotes the natural logarithm of~$k$.
$H_k = \sum_{i=1}^k\frac{1}{i}$ denotes the $k$th harmonic number, with $H_k \sim \ln k$, where $f(k) \sim g(k)$ denotes that $\lim\limits_{k \to \infty} \frac{f(k)}{g(k)} = 1$.
We omit floors or ceilings 
{(which are asymptotically negligible)} when writing $\ln n$ to describe a quantity that should be integer-valued.
Throughout this paper, by convention $n$ denotes the \emph{population size} $n$, the number of agents.
We say an event $E$ happens \emph{with high probability (WHP)} if 
$\IP[\neg E] = O(1/n)$.

If a self-stabilizing protocol stabilizes with high probability, then we can make this high probability bound $1-O(1/n^c)$ for any constant $c$. This is because in the low probability of an error, we can repeat the argument, using the current configuration as the initial configuration. Each of these potential repetitions gives a new ``epoch'', where the Markovian property of the model ensures the events of stabilizing in each epoch are independent. Thus
the protocol will stabilize after at most $c$ of these ``epochs'' with probability $1-O(1/n^c)$. By the same argument, if a self-stabilizing protocol can stabilize with any positive probability $p>0$, it will eventually stabilize with probability 1.

\paragraph*{Model.} 
We consider population protocols \cite{DBLP:journals/dc/AngluinADFP06} defined on a collection $\Agents$ of $n$ indistinguishable agents, also called a population. We assume a complete communication graph over $\Agents$, meaning that every pair of agents can interact. Each agent has a set~$\mathbf{S}$ of local states. At each discrete step of a protocol, a \emph{probabilistic scheduler} picks randomly an ordered pair of agents from $\Agents$ to interact.
During an interaction, the two agents mutually observe their states and update them according to
a probabilistic\footnote{Note that we allow randomness in the transitions for ease of presentation. All our protocols can be made deterministic by standard synthetic coin techniques without changing time or space bounds; see Section~\ref{sec:derandomization}.} transition function
$\Ttable : \mathbf{S}\times \mathbf{S} \to \Dist(\mathbf{S}\times \mathbf{S})$ where
$\Dist(X)$ denotes the set of probability distributions on~$X$.

Given a finite population $\Agents$ and state set $\mathbf{S}$, we define a \emph{configuration} $C$ as a mapping $C:\Agents\to \mathbf{S}$. Given a starting configuration $C_0$, we define the corresponding {\em execution\/} as a sequence $(C_t)_{t\geq 0}$ of random configurations where each $C_{t+1}$ is obtained from $C_t$ by applying $\Ttable$ on the states of a uniform random ordered pair of agents $(a,b)$, 
i.e., $C_{t+1}(a),C_{t+1}(b)=\Ttable(C_t(a),C_t(b))$ and $C_{t+1}(x)=C_t(x)$ for all $x\in \Agents\setminus \{a,b\}$.
We use the word \emph{time} to mean the number of interactions divided by~$n$ (the number of agents), a.k.a.\  \emph{parallel time}.

    \paragraph*{Pseudocode conventions.}
    We describe states of agents by several \emph{fields},
    using fixed-width font to refer to a field such as $\field$.
    As a convention, we denote by $a.\field(t)$,
    when used outside of pseudocode,
    the value of $\field$ in agent $a$ at the end of the $t$\textsuperscript{th} interaction,
    omitting ``$a.$'' and/or ``$(t)$'' when the agent and/or interaction is clear from context.
    Constant values are displayed in a sans serif front such as $\mathsf{Yes}$/$\mathsf{No}.$
    When two agents $a$ and $b$ interact,
    we describe the update of each of them using pseudocode, where we refer to $\field$ of agent $i\in\{a,b\}$ as  $i.\field$.
    
    In each interaction, one agent is randomly chosen by the scheduler to be the ``initiator'' and the other the ``responder''.
    Most interactions are symmetric, so we do not explicitly label the initiator and responder unless an asymmetric interaction is required.\footnote{It is also possible to make all transitions symmetric using standard ``synthetic coin'' techniques; see Section~\ref{sec:derandomization}.}
    
    A special type of field is called a \emph{role}, used in some of our protocols to optimize space usage and limit the types of states accessible to an adversarial initial condition.
    If an agent has several fields each from a certain set, 
    then that agent's potential set of states is the cross product of all the sets for each field,
    i.e., adding a field from a set of size $k$ multiplies the number of states by $k$.
    A role is used to \emph{partition} the state space:
    different roles correspond to different sets of fields,
    so switching roles amounts to deleting the fields from the previous role.
    Thus the total number of states is obtained by \emph{adding} the number of states in each role.

\paragraph*{Convergence and stabilization.}
Population protocols have some problem-dependent notion of ``correct'' configurations.
(For example, a configuration with a single leader is ``correct'' for leader election.)
A configuration $C$ is \emph{stably correct} if every configuration reachable from $C$ is correct.
An execution $\mathcal{E} = (C_0,C_1,\ldots)$ is picked at random according to the scheduler explained above. 
We say $\mathcal{E}$ \emph{converges} (respectively, \emph{stabilizes)} at interaction $i \in \IN$ if $C_{i-1}$ is not correct (resp., stably correct) and for all $j \geq i$, $C_j$ is correct (resp., stably correct).
The {\em (parallel) convergence/stabilization time\/} of a protocol is defined as the number of iterations to converge/stabilize, divided by $n$. 
Convergence can happen strictly before stabilization,
although a protocol with a bounded number of states converges from a configuration $C$ with probability $p \in [0,1]$ 
if and only if it stabilizes from $C$ with probability $p$.
For a computational task $T$ equipped with some definition of ``correct'',
we say that a protocol \emph{stably computes $T$ with probability $p$} if,
with probability $p$, it stabilizes (equivalently, converges).

\paragraph*{Leader election and ranking.}
The two tasks we study in this paper are self-stabilizing \emph{leader election} (SSLE) and \emph{ranking} (SSR).
For both, the self-stabilizing requirement states that from any configuration, 
a stably correct configuration must be reached with probability 1.
For leader election, each agent has a field $\leader$ with potential values $\{\yes,\no\}$,
 and a \emph{correct} configuration is defined
where exactly one agent $a$ has $a.\leader = \yes$.\footnote{
    We do not stipulate the stricter requirement that one agent stays the leader, rather than letting the $\leader=\yes$ bit swap among agents, but we claim these problems are equivalent due to the complete communication graph.
    A protocol solving SSLE can also ``immobilize'' the unique $\leader=\yes$ bit by replacing any transition $(x,y) \to (w,z)$, where $x.\leader = z.\leader = \yes$ and $y.\leader = w.\leader = \no$,
    with $(x,y) \to (z,w)$. 
}
For ranking, each agent has a field $\rank$ with potential values $\{1,\ldots,n\}$,
and a \emph{correct} configuration is defined as one where,
for each $r \in \{1,\ldots,n\}$,
exactly one agent $a$ has $a.\rank = r$.
As noted in Sec.~\ref{sec:intro}, any protocol solving SSR also solves SSLE by assigning $\leader$ to $\yes$ if and only if $\rank = 1$;
for brevity we omit the $\leader$ bit from our protocols and focus solely on the ranking problem.
\opt{inline,append}{Observation~\ref{obs:leader-election-without-ranking} shows that the converse does not hold.}

\paragraph*{SSLE Protocol from \cite{cai2012prove}.}
Protocol~\ref{algo:original} shows the original SSLE protocol from \cite{cai2012prove}. We display it here to introduce our pseudocode style and make it clear that this protocol is also solving ranking.\footnote{Their state set $\{0,\ldots,n-1\}$ from \cite{cai2012prove} is clearly equivalent to our formal definition of a $\rankself\in\{1,\ldots,n\}$, but simplifies the modular arithmetic.}

The convergence proofs in \cite{cai2012prove} did not consider our definition of parallel time via the uniform random scheduler. Thus we also include proofs that \OriginalSSLE\ stabilizes in $\Theta(n^2)$ time, in expectation and WHP (see Theorem~\ref{thm:original-timebound}). It is straightforward to argue an $\Omega(n^2)$ time lower bound from a configuration with  $2$ agents at $\rankself = 0$, $0$ agents at $\rankself = n-1$, and $1$ agent at every other $\rankself$.
This requires $n-1$ consecutive ``bottleneck'' transitions,
each moving an agent up by one rank starting at 0.
Each takes expected time $\Theta(n)$ since two specific agents (the two with the same rank) must interact directly.
Our arguments for a $O(n^2)$ time upper bound give a separate proof of correctness from that in \cite{cai2012prove}, reasoning about a barrier rank that is never crossed.

\begin{protocol}[H]
    \caption{\OriginalSSLE, for initiator $a$ interacting with responder $b$
    \\
    \textbf{Fields:} $\rankself \in \{ 0,\ldots,n-1\}$}
    \label{algo:original}
    \begin{algorithmic}[1]
        \If{$a.\rankself=b.\rankself$}
            \State{$b.\rankself\gets (b.\rankself+1) \bmod n$ }
        \EndIf
    \end{algorithmic}
\end{protocol}

Cai, Izumi, and Wada~\cite{cai2012prove} show that the state complexity of this protocol is optimal.
A protocol is \emph{strongly nonuniform} if, for any $n_1 < n_2$, a different set of transitions is used for populations of size $n_1$ and those of size $n_2$ (intuitively, the agents hardcode the exact value~$n$).

\begin{theorem}[\cite{cai2012prove}]
\label{thm:n-state-lower-bound}
    Any population protocol solving SSLE has $\geq n$ states and is strongly nonuniform.
\end{theorem}

It is worth seeing why any SSLE protocol must be strongly nonuniform.
Suppose the same transitions are used in population sizes $n_1 < n_2$.
By identifying in a single-leader population of size $n_2$ any subpopulation of size $n_1$ that does not contain the leader,
sufficiently many interactions strictly within the subpopulation must eventually produce a second leader.
Thus the full population cannot be stable. These conflicting requirements to both produce a new leader from a leaderless configuration, 
but also make sure the single-leader configuration is stable, 
is the key new challenge of leader election in the self-stabilizing setting.
Protocols solving SSLE circumvent this error by using knowledge of the exact population size $n$.

\begin{toappendix}



We analyze the time complexity of the \OriginalSSLE\ protocol.
It crucially relies on the fact that every (initial) configuration guarantees the existence of a ``barrier rank'' that is never exceeded by an interaction, disallowing indefinite cycles.
More formally,
denote by $m_i(C)$ the number of agents with rank~$i$ in configuration~$C$.
We will show that, starting from a configuration~$C_0$, there exists some~$k$ such that
\begin{equation} \label{eq:original:barrier:sums}
    \forall r\in \{0,\dots,n-1\}\colon\quad
    \sum_{d=0}^{r} m_{(k-d)\bmod n}(C) \leq r+1
\end{equation}
for all configurations~$C$ that are reachable from~$C_0$.
Then~$k$ is a barrier rank as it guarantees $m_k(C) \leq 1$ during the whole execution.

\begin{lemma}\label{lem:original:barrier:exists}
    For every configuration~$C$ of \OriginalSSLE,
    there exists some $k \in \{0,\dots,n-1\}$ such that~\eqref{eq:original:barrier:sums} holds in~$C$.
\end{lemma}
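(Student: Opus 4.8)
The plan is to find the barrier rank $k$ by a counting/averaging argument on the cyclic sequence $m_0(C), m_1(C), \dots, m_{n-1}(C)$, and then to observe that the transition rule of \OriginalSSLE\ can only preserve or ``tighten'' the inequalities in~\eqref{eq:original:barrier:sums}, so that establishing them in the single configuration~$C$ suffices (the lemma as stated only asks for a~$k$ making~\eqref{eq:original:barrier:sums} hold in~$C$ itself; the propagation to all reachable configurations is the remark after the lemma, which I would also sketch).

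First I would set up notation: for a fixed starting index $k$, define the partial sums $S_k(r) = \sum_{d=0}^{r} m_{(k-d)\bmod n}(C)$ for $r \in \{0,\dots,n-1\}$, which walk backwards around the cycle from~$k$. Note $S_k(n-1) = n$ always, since every agent is counted exactly once. I want a starting point~$k$ such that $S_k(r) \le r+1$ for all~$r$, i.e., the backward prefix sums never run ``ahead'' of the diagonal. Equivalently, writing $a_i = m_i(C) - 1$ (so $\sum_i a_i = 0$), I need a cyclic rotation of the sequence $(a_i)$ whose every backward prefix sum is $\le 0$. This is exactly the classical ``cycle lemma'' situation: for any integer sequence summing to~$0$, there is a cyclic starting point from which all prefix sums are non-positive. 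Concretely, I would choose~$k$ to be (one more than, modulo $n$, in the appropriate direction) an index where the prefix sum of the $a_i$'s is \emph{maximized}; starting the backward walk just past the global maximum of the running sum forces every subsequent backward partial sum to be bounded by that maximum minus itself, hence non-positive. Translating back, this gives $S_k(r) \le r+1$ for every~$r$, which is precisely~\eqref{eq:original:barrier:sums} in~$C$.

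For the propagation remark (to motivate why $k$ is a genuine barrier), I would check that a single interaction preserves all the inequalities~\eqref{eq:original:barrier:sums}. The only transition moves one agent from some rank~$j$ to rank $(j+1)\bmod n$, and it fires only when at least two agents share rank~$j$, i.e., $m_j \ge 2$. Moving an agent from~$j$ to $j+1$ decreases $m_j$ by~$1$ and increases $m_{j+1}$ by~$1$. In the backward partial sums $S_k(r)$, rank~$j$ appears no earlier than rank $j+1$ in the backward order (since we go $k, k-1, k-2, \dots$), so for each~$r$ the sum $S_k(r)$ either is unchanged (both $j,j+1$ inside or both outside the first $r+1$ terms), or it changes by $-1$ (only $j$ inside) — it can never increase — \emph{except} possibly the one boundary case where $j+1 \equiv k$, i.e., the moved agent lands exactly on the barrier; but then $j \equiv k-1$ was strictly inside every $S_k(r)$ with $r\ge 1$, and one checks $S_k(0) = m_k$ would go from some value to one more, so this is the case to rule out. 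Here I use that $m_j\ge 2$ before the move together with the standing inequality $S_k(1) = m_k + m_{k-1} \le 2$ to conclude $m_k = 0$ before the move, so afterwards $m_k = 1$ and $S_k(0)\le 1$ still holds; all other $S_k(r)$ only decrease or stay. Hence~\eqref{eq:original:barrier:sums} is an invariant, and in particular $m_k(C)\le 1$ throughout.

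The main obstacle I anticipate is handling the cyclic ``wrap-around'' bookkeeping cleanly — in particular making the argmax-of-running-sum choice precise with the correct off-by-one and the correct direction (the backward walk $k, k-1, \dots$ versus the forward increment $j \mapsto j+1$ of the transition), and isolating exactly the single edge case in the invariance check where the agent moves onto the barrier rank itself. The underlying combinatorics (cycle lemma) is standard and short once the indexing is nailed down; everything else is routine.
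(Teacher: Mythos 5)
Your existence argument is essentially the paper's proof: both reduce to the cycle lemma for $a_i=m_i(C)-1$ (which sums to $0$), and the paper resolves exactly the off-by-one/direction bookkeeping you flag by choosing $k$ to \emph{minimize} the forward prefix sums $S_i=\sum_{j=0}^{i}\bigl(m_j(C)-1\bigr)$, so that every backward partial sum in~\eqref{eq:original:barrier:sums} equals $(r+1)$ plus $S_k$ minus another prefix sum, hence is at most $r+1$. When you make your recipe precise, note that the extremum must be taken along the walk direction (equivalently, the forward \emph{minimum}): starting adjacent to the maximum of the forward prefix sums fails in general, e.g.\ for $m=(3,0,0,0,2,1)$ the forward sums are $(2,1,0,-1,0,0)$ with maximum at index $0$, yet the only valid barrier is $k=3$, and neither neighbor of index $0$ works. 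One further caution about your propagation aside (which concerns Lemma~\ref{lem:original:barrier:in:time}, not the statement under review): the sign is backwards --- since rank $j{+}1$ is reached \emph{before} rank $j$ in the backward walk, the single prefix ending exactly at $j{+}1$ \emph{increases} by one for every move, not only when $j{+}1\equiv k$, and that is the case the paper handles by showing the corresponding sum was at most $r$ beforehand (your $j{+}1\equiv k$ computation is just the $r=0$ instance of that argument).
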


\begin{proof}
    Define~$S_i = \sum_{j=0}^i \big(m_j(C)-1\big)$ for $i\in\{0,\dots,n-1\}$.
    Note that $ S_{n-1} = 0$ since $\sum_{j=1}^{n-1} m_j(C) = n$.
    Choose $k\in\{0,\dots,n-1\}$ such that~$S_k$ is minimal.
    For $r\leq k$ we have
    \begin{equation*}
        \sum_{d=0}^{r} m_{(k-d)\bmod n}(C)
        =
        \sum_{j=k-r}^{k} m_{j}(C)
        =
        (r+1) + \sum_{j=k-r}^{k} \big(m_{j}(C)-1\big)
        =
        (r+1) + \left(S_{k} - S_{k-r+1}\right)
        \leq r+1
    \end{equation*}
    since $S_k \leq S_{k-r+1}$.
    For $r > k$ we have
    \begin{equation*}
        \sum_{d=0}^{r} m_{(k-d)\bmod n}(C)
        =
        \sum_{j=0}^{k} m_{j}(C)
        +
        \sum_{j=n-r+k}^{n-1} m_j(C)
        =
        (r+1) + \left(S_{k} + S_{n-1} - S_{n-r+k-1}\right)
        \leq r+1
    \end{equation*}
    since $S_{n-1} = 0$ and $S_k \leq S_{n-r+k-1}$.
\end{proof}

\begin{lemma}\label{lem:original:barrier:in:time}
    Let $k\in\{0,\dots,n-1\}$.
    If~\eqref{eq:original:barrier:sums} holds for~$k$ in configuration~$C$, then~\eqref{eq:original:barrier:sums} holds
    for~$k$ in all configurations reachable from~$C$.
\end{lemma}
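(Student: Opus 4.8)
The plan is to show that the family of inequalities in \eqref{eq:original:barrier:sums} is preserved by a single application of the transition rule; the lemma then follows by induction along any path from $C$ to a reachable configuration. So I fix $C$ satisfying \eqref{eq:original:barrier:sums} for $k$ and let $C'$ be obtained from $C$ in one interaction. The only state-changing interactions are those in which the two interacting agents share some rank $i$, after which one of them advances to rank $(i+1)\bmod n$; such a transition requires $m_i(C)\ge 2$, and it yields $m_i(C')=m_i(C)-1$, $m_{(i+1)\bmod n}(C')=m_{(i+1)\bmod n}(C)+1$, with every other $m_j$ unchanged. (If the interaction does not change the configuration, there is nothing to prove.)

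To see the effect on \eqref{eq:original:barrier:sums}, I write $j_d=(k-d)\bmod n$ for $d\in\{0,\dots,n-1\}$, so that the $j_d$ form a permutation of $\{0,\dots,n-1\}$ and the left-hand side of \eqref{eq:original:barrier:sums} is the partial sum $P_r(C):=\sum_{d=0}^{r}m_{j_d}(C)$. Let $d_0$ be the unique index with $j_{d_0}=i$; then the rank $(i+1)\bmod n$ that gains an agent is exactly $j_{d_0-1}$ (with the convention $j_{-1}:=j_{n-1}$). First I dispose of the degenerate case $d_0=0$, i.e.\ $i=k$: the transition would require $m_k(C)\ge 2$, but instantiating \eqref{eq:original:barrier:sums} at $r=0$ gives $m_k(C)=P_0(C)\le 1$, a contradiction, so this transition never happens.

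For the main case $d_0\ge 1$, the decrement at position $d_0$ together with the increment at position $d_0-1$ affects the partial sums in a very controlled way: for $r\ge d_0$ both affected positions lie within the sum, so $P_r(C')=P_r(C)\le r+1$; for $r<d_0-1$ neither does, so again $P_r(C')=P_r(C)\le r+1$; and only at $r=d_0-1$ does the sum change, rising by exactly $1$. The crux — and the step I expect to require the most care — is to extract the missing slack at $r=d_0-1$ from the precondition $m_i(C)\ge 2$: since $P_{d_0}(C)=P_{d_0-1}(C)+m_{j_{d_0}}(C)\ge P_{d_0-1}(C)+2$ and $P_{d_0}(C)\le d_0+1$ by hypothesis, we get $P_{d_0-1}(C)\le d_0-1$, hence $P_{d_0-1}(C')=P_{d_0-1}(C)+1\le d_0=(d_0-1)+1$, which is exactly the bound required at $r=d_0-1$. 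This verifies \eqref{eq:original:barrier:sums} in $C'$ and closes the induction.
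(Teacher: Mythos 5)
Your proof is correct and follows essentially the same route as the paper's: reduce to a single transition, observe that only the partial sum ending just before the decremented rank changes (and only by $+1$), and use the precondition $m_i(C)\ge 2$ together with the hypothesis at the next larger index to show that this sum had slack at least one. The only differences are cosmetic — the paper normalizes $k=n-1$ by cyclic permutation and phrases the slack step as a contradiction, whereas you keep the indices modulo $n$ and argue directly — so the two arguments coincide.
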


\begin{proof}
    Without loss of generality we assume $k=n-1$ by cyclic permutation of the ranks.
    It suffices to show the lemma's statement for a direct successor configuration~$C'$ of~$C$.
    Let~$i$ be the rank of the initiator and~$j$ that of the responder in the interaction leading from~$C$ to~$C'$.
    If $i\neq j$, then  $m_\ell(C')=m_\ell(C)$ for all $\ell\in \{0,\dots,n-1\}$ and thus the statement follows from the hypothesis on~$C$.
    So assume $i=j$ in the rest of the proof.
    Then $i < n-1$ since $m_{n-1}(C)\leq 1$ using~\eqref{eq:original:barrier:sums} with $r=0$.
    Thus $m_i(C') = m_i(C)-1$ and $m_{i+1}(C') = m_{i+1}(C) + 1$.
    This means that all sums except for $r = n-i-2$ in~\eqref{eq:original:barrier:sums} remain constant when passing from~$C$ to~$C'$.
    
    To bound the sum for $r = n-i-2$,  we prove that we actually have $\sum_{d=0}^{r} m_{k-d}(C) \leq r$,
    which then implies $\sum_{d=0}^{r} m_{k-d}(C') = 1+\sum_{d=0}^{r} m_{k-d}(C) \leq r+1$ as required.
    In fact, if $\sum_{d=0}^{r} m_{k-d}(C) = r+1$, then
    \begin{equation*}
        m_i(C)
        =
        \sum_{d=0}^{r+1} m_{k-d}(C)
        -
        \sum_{d=0}^{r} m_{k-d}(C)
        \leq
        (r+2) - (r+1)
        =
        1
        \enspace,
    \end{equation*}
    which is a contradiction to the fact that there are two different agents with rank~$i$ in configuration~$C$ that can interact.
\end{proof}

\begin{theorem}
\label{thm:original-timebound}
    The \OriginalSSLE\ protocol solves self-stabilizing ranking. The silence time from the worst-case initial configuration is $\Theta(n^2)$ in expectation and with probability $1-\exp(-\Theta(n))$.
\end{theorem}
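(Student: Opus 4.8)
There are three things to prove: (a) \OriginalSSLE\ is self-stabilizing for ranking, (b) the $O(n^2)$ upper bound on the silence time from \emph{every} initial configuration (in expectation and WHP), and (c) a matching $\Omega(n^2)$ lower bound from \emph{some} initial configuration. Everything rests on the barrier-rank machinery of Lemmas~\ref{lem:original:barrier:exists} and~\ref{lem:original:barrier:in:time}: fix a barrier rank $k$ for $C_0$ and, relabeling ranks cyclically (the protocol is invariant under this, being arithmetic mod $n$), assume $k=n-1$, so that \eqref{eq:original:barrier:sums} holds throughout the execution with $k=n-1$. In particular $m_{n-1}(C)\le 1$ always, no interaction ever moves an agent from rank $n-1$ to rank $0$, and the dynamics effectively live on the \emph{line} $\{0,\dots,n-1\}$ with agents only moving up.

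\textbf{Correctness.} I would use the potential $\Phi(C)=\sum_{a}a.\rank=\sum_{\ell=1}^{n-1}\sum_{j\ge\ell}m_j(C)$. Every interaction that changes the configuration — a \emph{firing}, i.e.\ a collision $a.\rank=b.\rank$ (which forces $a.\rank<n-1$ since $m_{n-1}\le 1$) — increases $\Phi$ by exactly $1$, and no interaction decreases it; moreover $\sum_{j\ge\ell}m_j(C)\le n-\ell$ by \eqref{eq:original:barrier:sums}, so $\Phi(C)\le\binom n2$. Hence every execution contains at most $\binom n2-\Phi(C_0)$ firings, so the configuration is eventually constant; and a configuration admitting no firing has all $n$ ranks occupied once each, so it is the unique correct configuration and is silent. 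Thus with probability $1$ the execution reaches a (stably correct) silent configuration, i.e.\ \OriginalSSLE\ solves self-stabilizing ranking, and silence time $=$ stabilization time.

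\textbf{Lower bound.} Take $C_0$ with two agents at rank $0$, one at each of ranks $1,\dots,n-2$, and none at rank $n-1$ (here $k=n-1$ is a barrier, since $S_i=1$ for $i<n-1$ and $S_{n-1}=0$). A short induction shows every reachable configuration has exactly one overloaded rank, that this rank is $<n-1$, and that the only productive transition collides the two agents there; so the execution is a forced ``collision wave'' of exactly $n-1$ firings, and the wait for the $j$-th firing is geometric with success probability $\tfrac{2}{n(n-1)}$ (the scheduler must pick the ordered pair of agents at the current overloaded rank). The number of interactions to silence is therefore a sum of $n-1$ i.i.d.\ such geometrics, with expectation $(n-1)\cdot\tfrac{n(n-1)}{2}=\Theta(n^3)$ interactions, i.e.\ $\Theta(n^2)$ parallel time; a Chernoff bound for the corresponding negative binomial (equivalently, for a sum of independent geometrics) gives the same value up to constant factors with probability $1-\exp(-\Theta(n))$.

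\textbf{Upper bound and the main obstacle.} From an arbitrary $C_0$ (again WLOG $k=n-1$) I would run a drift argument on $D(C):=\binom n2-\Phi(C)$, the number of firings still to come, using: the total firing count is $\le D(C_0)\le\binom n2$; the number $h(C)$ of empty ranks is nonincreasing; by convexity of $m\mapsto m(m-1)$ the number of ordered colliding pairs $k(C)=\sum_i m_i(C)(m_i(C)-1)$ satisfies $k(C)\ge \tfrac{n^2}{\,n-h(C)\,}-n\ge h(C)$ while $D(C)\le h(C)\cdot n$; and a firing occurs in expected $n(n-1)/k(C)$ interactions and drops $D$ by $1$. Decomposing the execution into the contiguous phases on which $h$ is constant, and bounding per phase the number of firings and the per-firing cost, is the natural route. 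The delicate part — and the step I expect to be hardest — is that the crude estimates (using only $k\ge 2$, or $k\ge h\ge D/n$) yield merely $O(n^2\log n)$ parallel time; obtaining the optimal $O(n^2)$ requires a sharper amortization exploiting the spatial geometry of the line dynamics: e.g.\ tracking the descent of the suffix of ranks that is permanently occupied by a single agent, and arguing that a near-sorted phase with only $O(1)$ colliding pairs can occur only once, at the very end, and lasts at most $n-1$ firings. The high-probability $O(n^2)$ bound then follows from the same phase decomposition together with concentration of sums of geometric random variables.
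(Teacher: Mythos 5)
Your correctness argument (the bounded potential $\Phi(C)=\sum_a a.\rank$ behind the barrier of Lemmas~\ref{lem:original:barrier:exists} and~\ref{lem:original:barrier:in:time}) and your lower bound are sound; the lower bound is essentially the paper's own argument (a forced collision wave of $n-1$ firings, each waiting time geometric with success probability $2/(n(n-1))=1/\binom{n}{2}$, with the $\exp(-\Theta(n))$ tail obtained via a negative-binomial/Chernoff bound where the paper invokes Janson's bound for sums of geometrics \cite{janson2018tail}). The genuine gap is the upper bound, which is the substantive half of the theorem, and you explicitly stop short of proving it. Your drift/phase decomposition by the number $h$ of empty ranks, with per-firing cost at most $\binom{n}{2}/k(C)$ and $k(C)\ge h$, does not reach $O(n^2)$ parallel time --- as you yourself note it loses at least a $\log n$ factor --- and the amortization you gesture at (``a near-sorted phase with $O(1)$ colliding pairs can occur only once, at the very end, and lasts at most $n-1$ firings'') is neither proved nor sufficient: the lossy contribution in this accounting comes from all the phases with small but superconstant $h$, and nothing in your sketch controls their aggregate cost. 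So as written the $O(n^2)$ bound from an arbitrary configuration, in expectation and with probability $1-\exp(-\Theta(n))$, is missing.

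The paper closes this with a different and much cleaner domination argument that avoids any global amortization. With the barrier relabeled to $k=n-1$, define $T_{-1}=0$ and $T_s=\min\{t\ge T_{s-1}\mid m_s(t)=1\}$. Once ranks $0,\dots,s-1$ each have count $1$ they keep it forever (a rank's count can only grow through a collision one rank below, and $m_{n-1}\le 1$ kills the wrap-around), so after $T_{s-1}$ rank $s$ receives no new agents and its count is only whittled down to $1$ by collisions among its own occupants. That decay is stochastically dominated by the fratricide process $L+L\to L+F$ started from $n$ agents, whose completion time $F$ is a sum of $\Theta(n)$ independent geometrics with $\IE\,F=\Theta(n^2)$ interactions and minimum success probability $p_*=1/\binom{n}{2}$. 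Hence $T_{n-1}$ is dominated by a sum of $n-1$ independent copies of $F$, giving $O(n^3)$ interactions, i.e.\ $O(n^2)$ parallel time, in expectation; and since the dominating sum consists of $\Theta(n^3)$ independent geometrics with parameter at least $p_*$, the upper tail bound of \cite{janson2018tail} yields failure probability $\exp(-\Theta(n))$. This rank-by-rank stopping-time argument is exactly the missing ``sharper'' step in your plan, and it is where the log-free $O(n^2)$ comes from; I recommend replacing your phase decomposition with it.
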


\begin{proof}
    We first prove the time lower bound.
    We let the protocol start in initial configuration~$C_0$ with $m_1(C_0) = 2$, $m_{n-1}(C_0) = 0$, and $m_i(C_0)=1$ for all $i\in\{1,\dots,n-2\}$.
    If we define $T_{-1},T_0,T_1,\dots,T_{n-1}$
    by $T_{-1}=0$ and
    $T_{s} = \min\{t\geq T_{s-1} \mid m_s(t) = 1\}$,
    then~$T_i$ is the first time that two agents with rank $i-1$ interact.
    As there is at most one rank with more than one agent, the difference $T_{i}-T_{i-1}$ is equal to the number of interactions until the two agents with rank~$i$ interact, which is a geometric random variable with probability of success $p = 1/\binom{n}{2}$.
    Then $\IE\,T_{n-1} = \sum_{i=0}^{n-1}\IE(T_{i}-T_{i-1})=(n-1)\binom{n}{2}=\Theta(n^3)$.
    As $T_{n-1}$ is the sum of independent geometric random variables, we will also use Theorem 3.1 from \cite{janson2018tail}, where $\mu = \IE\,T_{n-1}$ and $\lambda = 1/2$, to show the lower tail bound
    \[
    \IP[T_{n-1} \leq \lambda \mu] 
    \leq \exp\qty(-p\mu(\lambda - 1 - \ln \lambda))
    = \exp(-\Theta(n)).
    \]
    Thus the convergence time from $C_0$ is $\Omega(n^2)$ ($\Omega(n^3)$ interactions) in expectation and with probability $1-\exp(-\Theta(n))$. This concludes the proof of the time lower bound.

    We now turn to the time upper bound. Now let the initial configuration $C_0$ be arbitrary.
    By Lemmas~\ref{lem:original:barrier:exists} and~\ref{lem:original:barrier:in:time} we get the existence of some $k\in\{0,\dots,n-1\}$ such that~\eqref{eq:original:barrier:sums} holds for all $t\in\IN_0$.
    In particular $m_k(t) \leq 1$, which means that rank~$k$ is indeed a barrier for every execution starting in~$C_0$.
    Without loss of generality, by cyclic permutation of the ranks, we assume $k=n-1$.
    We inductively define $T_{-1},T_0,T_1,\dots,T_{n-1}$
    by $T_{-1}=0$ and
    $T_{s} = \min\{t\geq T_{s-1} \mid m_s(t) = 1\}$.
    Then $m_s(t)=1$ for all $t\geq T_s$.
    
    Now $T_s-T_{s-1}$ is the number of interactions for rank collisions to reduce the count $m_s$ to $1$. This is stochastically dominated by the convergence of the classic ``fratricide'' leader election ($L+L\to L+F$, starting from all $L$). Letting $F$ be the number of interactions for the fratricide process to converge to a single $L$, we have $F$ is the sum of independent (non-identical) geometric random variables, where $\IE\, F = \Theta(n^2)$ and the minimum parameter $p_* = 1/\binom{n}{2}$.
    We then have $T_{n-1}$ stochastically dominated by a sum $S = \sum_{i=1}^{n-1}F_i$ of $n-1$ independent copies of $F$ (which is a sum of $\Theta(n^2)$ geometric random variables).
    Then $\IE\, T_{n-1} \leq \IE\, S = (n-1)\IE\, F = O(n^3)$.
    Also, we can use Theorem 2.1 of \cite{janson2018tail} to show an upper tail bound (where $\mu = \IE\, S$ and $\lambda = 3/2$)
    \[
    \IP[S \geq \lambda \mu] 
    \leq \exp\qty(-p_*\mu(\lambda - 1 - \ln \lambda))
    = \exp(-\Theta(n)).
    \]
    Thus from any initial configuration $C_0$, the convergence time is $O(n^2)$ ($O(n^3)$ interactions) in expectation and with probability $1-\exp(-\Theta(n))$.
\end{proof}
\end{toappendix}

\begin{toappendix}
    \begin{observationrep}\label{obs:leader-election-without-ranking}
        There is a silent SSLE protocol whose states cannot be assigned ranks such that it also solves the SSR problem.
    \end{observationrep}
    
    \begin{proof}
        The following protocol solves silent SSLE for a population size $n=3$. (Note the construction from \cite{cai2012prove} in Protocol~\ref{algo:original} is strictly better protocol, the purpose of this construction is just to show an example solving silent SSLE without solving ranking).
        
        The state set is $S=\{l\}\cup F$, where $F=\{f_0,f_1,f_2,f_3,f_4\}$. There will be exactly 5 silent configurations of the three agents: 
        $\{l,f_0,f_1\}, \{l,f_1,f_2\}, \{l,f_2,f_3\}, \{l,f_3,f_4\}, \{l,f_4,f_0\}$.  
        (In other words, a leader $l$ and two distinct followers $f_i,f_j$ with $|i-j| \equiv 1 \mod 5$).
        
        This can be easily accomplished by adding transitions from $(s,s)$ (for all states $s\in S$) and from $(f_i,f_j)$ (for all $f_i,f_j\in F$ with $|i-j| \not\equiv 1 \mod 5$) to a uniform random pair of states $(a,b)\in S\times S$. It is easily observed that starting from any configuration of 3 agents, this protocol must stabilize to one of the $5$ silent configurations above, and thus solves SSLE.
        
        However, there is no way to consistently assign the ranks $1,2,3$ to the states in the silent configurations. If WLOG we denote $l$ to be rank 1, then we must assign ranks 2 or 3 to each state in $F$. But since $|F|$ is odd, every such assignment places two states $f_i,f_j$ in the same rank where $|i-j| \equiv 1 \mod 5$. Since $\{l,f_i,f_j\}$ is a silent configuration
        that is incorrectly ranked, we have a contradiction.
    \end{proof}
    
    Note that Observation~\ref{obs:leader-election-without-ranking} does not rule out a more sophisticated way to transform any SSLE protocol to a SSR protocol.
    It merely rules out the most simple approach: assigning ranks to the existing states, without otherwise changing the protocol.

\end{toappendix}

\paragraph*{Silent protocols.}

A configuration $C$ is \emph{silent} if no transition is applicable to it
(put another way, every pair of states present in $C$ has only a null transition that does not alter the configuration).
A self-stabilizing protocol is \emph{silent} if,
with probability 1,
it reaches a silent configuration from every configuration.
Since convergence time $\leq$ stabilization time $\leq$ silence time, the following bound applies to all three.

\begin{observation}\label{obs:linear-time-lower-bound-silent}
    Any silent SSLE protocol has $\Omega(n)$ expected convergence time
    and
    for any $\alpha > 0$,
    probability 
    $\geq \frac{1}{2} n^{- 3 \alpha}$ to require $\geq \alpha n \ln n$ 
    convergence time.
\end{observation}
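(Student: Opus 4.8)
The plan is to exhibit, for an arbitrary silent self-stabilizing leader-election protocol, a single initial configuration from which the execution stays frozen until one designated pair of agents interacts, and then to observe that this pair takes $\Omega(n)$ expected parallel time — and, with polynomially small probability, $\Omega(n\log n)$ parallel time — to meet.

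First I would record two elementary facts. Because the protocol is silent and self-stabilizing, silent configurations exist, and every silent configuration $C$ must be stably correct: the only configuration reachable from $C$ is $C$ itself, so self-stabilization forces $C$ to be correct, i.e.\ to have exactly one agent with $\leader=\yes$. Fix one such silent configuration $C^*$, let $\ell$ be its unique leader, and let $s_\ell$ be $\ell$'s state; no other agent is in state $s_\ell$ in $C^*$ (it would be a second leader), so $\ell$ is the unique occupant of $s_\ell$.

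Next I would build the bad initial configuration $C_0$: take $C^*$ and switch one arbitrary agent $a\neq\ell$ to state $s_\ell$, so that $C_0$ has two leaders and is not correct. The heart of the argument is the claim that $\{a,\ell\}$ is the \emph{only} pair of agents in $C_0$ admitting a non-null transition. For any other pair $\{x,y\}$ of agents, the unordered pair of states they hold in $C_0$ already occurs as the state-pair of some pair of agents in $C^*$: if $a\notin\{x,y\}$, it is the pair $\{x,y\}$ itself; if $\{x,y\}=\{a,z\}$ with $z\neq\ell$, the state-pair is $\{s_\ell,s_z\}$, realized in $C^*$ by $\{\ell,z\}$. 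Silence of $C^*$ then makes all of these transitions null. For the pair $\{a,\ell\}$ both agents are in $s_\ell$, and while $\Ttable(s_\ell,s_\ell)$ was never \emph{applicable} in $C^*$, it cannot be null: otherwise $C_0$ would itself be silent while having two leaders, contradicting self-stabilization. Hence $\{a,\ell\}$ is the unique active pair, and starting from $C_0$ the configuration cannot change until $a$ and $\ell$ first interact.

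Finally I would convert this into the time bounds. Let $T$ be the index of the first interaction that selects the pair $\{a,\ell\}$; since the scheduler picks a uniformly random pair each step, $T$ is geometric with success probability $1/\binom n2$. The configurations $C_0,\dots,C_{T-1}$ all equal the incorrect configuration $C_0$, so no convergence can occur before interaction $T$, and therefore the parallel convergence time is at least $T/n$ deterministically. Hence the expected parallel convergence time is at least $\IE[T]/n=\binom n2/n=(n-1)/2=\Omega(n)$, and for every $\alpha>0$,
\[
\IP[\text{convergence time}\ge\alpha n\ln n]\ \ge\ \IP\bigl[T\ge\alpha n^2\ln n\bigr]\ =\ \bigl(1-1/\tbinom n2\bigr)^{\lceil\alpha n^2\ln n\rceil-1}\ \ge\ \tfrac12\,n^{-3\alpha},
\]
where the last step is a routine estimate using $\ln(1-p)\ge-p-p^2$ with $p=1/\binom n2\le\tfrac12$ (the slack between the exponent $3\alpha$ and the naive $2\alpha$, together with the factor $\tfrac12$, absorbs the lower-order corrections). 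The main obstacle is the structural claim of the third paragraph — in particular the point that ruling out $C_0$ being silent requires self-stabilization, not merely silence; everything after that is bookkeeping on geometric random variables.
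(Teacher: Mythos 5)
Your proposal is correct and follows essentially the same route as the paper's proof: take a silent configuration with its unique leader, copy the leader's state onto a second agent, observe that silence of the original configuration freezes every pair except the two leader-state agents, and then bound the geometric waiting time for that one pair to meet. Your extra justifications (that silent configurations must be correct, and that the $(s_\ell,s_\ell)$ transition is non-null by self-stabilization) are fine but not needed for the lower bound, and the final tail estimate matches the paper's $\frac{1}{2}n^{-3\alpha}$ computation.
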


For example, letting $\alpha=1/3$, 
with probability $\geq \frac{1}{2n}$ 
the protocol requires $\geq \frac{1}{3} n \ln n$ time.

\begin{proof}
    Let $C$ be a silent configuration with a single agent in a leader state $\ell$.
    Let $C'$ be the configuration obtained by picking an arbitrary non-leader agent in $C$ and setting its state also to $\ell$.
    Since $C$ is silent and the states in $C'$ are a subset of those in $C$,
    no state in $C'$ other than $\ell$ can interact nontrivially with $\ell$.
    So the two $\ell$'s in $C'$ must interact to reduce the count of $\ell$.
    The number of interactions for this to happen is geometric with $\IP[\text{success}] = 1/\binom{n}{2} = \frac{2}{n(n-1)} < 3/n^2$, 
    so expected time $\geq n/3$ 
    and for any $\alpha > 0$,
    at least $\alpha n^2 \ln n$ interactions ($\alpha n \ln n$ time) are required with probability at least
    \[
        \qty( 1 - 3/n^2 )^{\alpha n^2 \ln n} 
    \geq 
        \frac{1}{2} e^{- 3 \alpha \ln n} 
    = 
        \frac{1}{2} n^{- 3 \alpha}. \qedhere
    \]
\end{proof}

\paragraph*{Probabilistic tools.}
An important foundational process is the {\em two-way epidemic process\/} for efficiently propagating a piece of information from a single agent to the whole population.

We also consider a generalization, the \emph{roll call process}, where every agent simultaneously propagates a unique piece of information (its \emph{name}). 
We build upon bounds from \cite{mocquard2016analysis} to show this process is also efficient (only $1.5$ times slower than the original epidemic process). This process appears in 
two of our protocols, but also gives upper bounds on the time needed for any parallel information propagation, since after the roll call process completes, every agent has had a chance to ``hear from'' every other agent. 
The analysis of these two processes gives tight large deviation bounds with specific constants. While getting these precise constants was more than 
what is strictly
necessary for the proofs in this work, the analysis of these processes may be of independent interest. This roll call process has been independently analyzed in 
\cite{casteigts2020sharp, RandomExchangesHaigh, RandomExchangesSteele, RandomExchangesMoon}.
\opt{append}{The analysis appears in the Appendix.}



\begin{toappendix}
\subsection{Probabilistic Tools}
    In the {\em two-way epidemic process\/}, agents have a variable $\mathtt{infected}\in\{\true,\false\}$ updating as $a.\mathtt{infected},b.\mathtt{infected} \gets (a.\mathtt{infected}\ \vee\ b.\mathtt{infected})$.
    Mocquard, Sericola, Robert, and Anceaume~\cite{mocquard2016analysis} gave an in-depth analysis of the two-way epidemic process. 
    This analysis gives upper bounds for many processes in our protocols. 
    In any process where some field value is propagated this way,
    in addition to other transitions or initial conditions with more than one infected agent,
    which may speed up the propagation but cannot hinder it, 
    we denote that process a {\em superepidemic}.
    The number of interactions $X$ to spread to the whole population is clearly stochastically dominated by the two-way epidemic.\footnote{
        We note that this sort of process, which is stochastically dominated by a ``pure epidemic'', 
        is generally the sort of process studied in most population protocols papers that use the term \emph{epidemic}.
    }
    Consequently, we state the results below for normal epidemics, but use them to reason about superepidemics.
    
    The next lemma uses results of~\cite{mocquard2016analysis}
    to prove a simplified upper tail bound.
    
    \begin{lemma}[\cite{mocquard2016analysis}]
    \label{lem:epidemic:upper:tail}
        Starting from a population of size $n$ with a single infected agent, let $T_n$ be the number of interactions until $a.\mathtt{infected}=\true$ for all $a\in\Agents$. 
        Then 
        $\IE[T_n] = (n-1)H_{n-1} \sim n \ln n$, and for $n \geq 8$ and $\delta \geq 0$,
        \[
            \IP[ T_n > (1+\delta) \IE[ T_n ] ]
            \leq 
            2.5 \ln(n) \cdot n^{-2\delta}
        .
        \]
    \end{lemma}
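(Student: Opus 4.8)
The plan is to expose the geometric structure of $T_n$, read off its mean, and then obtain the tail bound by a short moment-generating-function argument (or, equivalently, by quoting the corresponding estimate of \cite{mocquard2016analysis}), followed by an elementary simplification. For the reduction: when exactly $i$ agents are infected ($1\le i\le n-1$), the next interaction raises the infected count iff the ordered interacting pair contains exactly one infected agent, which happens with probability $p_i=\frac{2i(n-i)}{n(n-1)}$, and these events are independent across the $n-1$ stages. Hence $T_n=\sum_{i=1}^{n-1}G_i$ with $G_i$ independent, geometric with parameter $p_i$, and
\[
\IE[T_n]=\sum_{i=1}^{n-1}\frac{1}{p_i}=\frac{n(n-1)}{2}\sum_{i=1}^{n-1}\frac{1}{i(n-i)}=\frac{n(n-1)}{2}\cdot\frac1n\sum_{i=1}^{n-1}\Bigl(\frac1i+\frac1{n-i}\Bigr)=(n-1)H_{n-1},
\]
using the symmetry $i\mapsto n-i$ of the last sum; and $(n-1)H_{n-1}\sim n\ln n$ because $H_{n-1}\sim\ln n$.

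For the tail, the key point is that the slowest two stages, $i=1$ and $i=n-1$, both have $p_i=2/n$, and $2/n$ is the exponential rate governing the upper tail. I would avoid a single global Chernoff bound (which blows up at the critical parameter $\theta=-\ln(1-2/n)$ exactly because of these two stages) and instead write $T_n=(G_1+G_{n-1})+Y$ with $Y=\sum_{i=2}^{n-2}G_i$. The tail of $G_1+G_{n-1}$ is exactly $\IP[G_1+G_{n-1}>m]=(1-2/n)^{m-1}\bigl(\tfrac{2m}{n}+1-\tfrac2n\bigr)$; convolving with $Y$ and bounding the leftover exponential moment, I would use the telescoping identity
\[
\IE\bigl[(1-2/n)^{-Y}\bigr]=\prod_{i=2}^{n-2}\IE\bigl[(1-2/n)^{-G_i}\bigr]=\prod_{i=2}^{n-2}\frac{i(n-i)}{(i-1)(n-i-1)}=(n-2)^2,
\]
which is finite precisely because $p_i>2/n$ for $2\le i\le n-2$. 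This yields the closed form $\IP[T_n>t]\le(1-2/n)^{t-1}(n-2)^2\bigl(\tfrac{2t}{n}+2\bigr)$ (the $+2$ absorbing the event $Y\ge t$). Alternatively, one simply cites the matching tail estimate from \cite{mocquard2016analysis}.

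It remains to substitute $t=(1+\delta)\IE[T_n]=(1+\delta)(n-1)H_{n-1}$ and simplify. Using the elementary inequality $\ln(1-2/n)<-\tfrac{2}{n-1}$ gives $(1-2/n)^{t}<e^{-2(1+\delta)H_{n-1}}$; combined with $H_{n-1}\ge\ln(n-1)+\gamma\ge\ln n+\gamma-\tfrac1{n-1}$ this becomes $C^{1+\delta}n^{-2(1+\delta)}$ for some absolute constant $C<1$ (once $n\ge 8$). The surplus factor $C^{\delta}$ is exactly what controls the linear-in-$t$ prefactor $\tfrac{2t}{n}+2=\Theta\bigl((1+\delta)\ln n\bigr)$, via $\max_{x\ge1}x\,C^{x}<\infty$. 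After the $(n-2)^2$ cancels against the $n^{-2}$ inside $n^{-2(1+\delta)}$ (and a spare $\tfrac{n}{n-2}$ coming from $(1-2/n)^{t-1}$), one is left with a bound of the form $(aH_{n-1}+b)\,n^{-2\delta}$ with small explicit constants $a,b$, and a one-line check gives $aH_{n-1}+b\le 2.5\ln n$ for $n\ge 8$.

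The main obstacle is precisely that last simplification: the crude combination $(1-2/n)^{n-1}\approx e^{-2}$ and $H_{n-1}\ge\ln n$ only yields $(1-2/n)^{t}\lesssim n^{-2(1+\delta)}$ with an $\Theta(1)$ — not $o(1)$ — leading constant, which is too weak to beat the $(1+\delta)$ prefactor uniformly over all $\delta\ge0$. One has to squeeze out the genuine slack — that $H_{n-1}$ exceeds $\ln n$ by essentially the Euler--Mascheroni constant $\gamma\approx0.577$, and that $-\ln(1-2/n)$ exceeds $2/n$ — to push the base of the $\delta$-power strictly below $1$; everything else is bookkeeping, aided by the hypothesis $n\ge 8$ to absorb lower-order terms into the constant $2.5$.
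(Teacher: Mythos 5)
Your proposal is correct, and its decisive final step is the same as the paper's. The paper's proof simply quotes from Mocquard et al.\ both the exact mean $(n-1)H_{n-1}$ and the closed-form tail bound $\bigl(1+\tfrac{2c(n-1)H_{n-1}(n-2)^2}{n}\bigr)\bigl(1-\tfrac2n\bigr)^{c(n-1)H_{n-1}-2}$, and then carries out exactly the simplification you outline: it uses $\tfrac{n-1}{n}H_{n-1}>\ln n+0.189$ for $n\ge 8$ to turn $(1-2/n)^{c(n-1)H_{n-1}}$ into $e^{-0.378c}n^{-2c}$ and kills the prefactor (linear in $c=1+\delta$) via $c\,e^{-0.378c}<1$ and $H_{n-1}<1.25\ln n$ — the same slack-squeezing you do with $\ln(1-2/n)<-\tfrac{2}{n-1}$, $H_{n-1}\ge\ln(n-1)+\gamma$, and $\max_{x\ge1}xC^x<\infty$; your ``main obstacle'' paragraph is precisely the content of the paper's constant $0.189$. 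What you add, and the paper does not do, is a self-contained derivation of the two quoted ingredients: the decomposition $T_n=\sum_{i=1}^{n-1}G_i$ with $p_i=\tfrac{2i(n-i)}{n(n-1)}$ giving $\IE[T_n]=(n-1)H_{n-1}$, and a re-derivation of the closed-form tail by peeling off the two critical stages $G_1,G_{n-1}$ (parameter exactly $2/n$) and computing $\IE[(1-2/n)^{-Y}]=\prod_{i=2}^{n-2}\tfrac{q_i}{q_i-2/n}=(n-2)^2$, which is valid since $p_i-\tfrac2n=\tfrac{2(i-1)(n-i-1)}{n(n-1)}>0$ there, and which reproduces essentially the same bound that the paper cites. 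So the two routes differ only in provenance: yours buys independence from the reference at the cost of some moment-generating-function bookkeeping (whose constants you would still need to pin down to land exactly on $2.5\ln n$), while the paper buys brevity by citation and spends its effort only on the elementary but delicate $n\ge 8$ simplification that you have also identified correctly.
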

    
    \begin{proof}
        From~\cite{mocquard2016analysis} we have $\IE[T_n]=(n-1)H_{n-1}\sim n\ln n$. 
        Also from~\cite{mocquard2016analysis}, for any $n\geq 3$ and $c\geq 1$, we have large deviation bound
        \begin{align*}
            \IP[T_n>c\IE[T_n]]
        &\leq 
            f(c,n) 
        =
            \qty(1+\frac{2c(n-1)H_{n-1}(n-2)^2}{n})\times\qty(1-\frac{2}{n})^{c(n-1)H_{n-1}-2}
        \\&\leq
            \qty(\frac{1}{(1-2/n)^2}+\frac{2c(n-1)H_{n-1}(n-2)^2}{n(1-2/n)^2})\qty(e^{-\frac{2}{n}})^{c(n-1)H_{n-1}}
        \\&=
            \qty(\frac{1}{(1-2/n)^2}-2cnH_{n-1}+2cn^2H_{n-1})\exp(-2c\frac{n-1}{n}
            H_{n-1})
        \\&\leq
            \qty(\frac{1}{(1-2/3)^2}-2\cdot 3 \cdot H_{3-1} + 2cn^2H_{n-1})\exp(-2c\frac{n-1}{n}
            H_{n-1})
        \\&=
            \qty(0 + 2 c n^2 H_{n-1}) \exp(-2c\frac{n-1}{n} H_{n-1}).
        \end{align*}
        Now observe that $\frac{n-1}{n}H_{n-1}>\ln n+0.189$ for all $n\geq 8$. Then
        \begin{align*}
            \IP[T_n>c\IE[T_n]]
        &\leq 
            2cn^2H_{n-1}e^{-2c(\ln n+0.189)}
        \\&= 
            2H_{n-1}ce^{-0.378c}n^{2-2c}.
        \end{align*}
        Now we observe that $H_{n-1}<1.25\ln n$ for all $n\geq 8$ and $ce^{-0.378c}<1$ for all $c\geq 1$. These inequalities give
        \begin{align*}
            \IP[T_n>c\IE[T_n]]
        &\leq 
            2\cdot 1.25\ln n \cdot n^{2-2c} = 2.5\ln n \cdot n^{-2\delta}
        \end{align*}
        taking $c=1+\delta$.
    \end{proof}



    \begin{corollaryrep}
    \label{cor:epidemic:upper:tail}
    Define $T_n$ as in Lemma~\ref{lem:epidemic:upper:tail}. 
    Then $\IE[T_n] < 1.2n\ln n$ and $\IP[T_n > 3n \ln n] < \frac{1}{n^2}$.
    \end{corollaryrep}
    
    \begin{proof}
        Observe that $\IE[T_n]=(n-1)H_{n-1}<1.2n\ln n$ for all $n\geq 2$.
        Then 
        $\IE[T_n] < 1.2n\ln n$. 
        Also $3n\ln n > 2.5\IE[T_n]$, so by the upper tail bound of Lemma~\ref{lem:epidemic:upper:tail}, we have 
        \[
            \IP[T_n > 3n \ln n] \leq \IP[T_n > (1+1.5)\IE[T_n]] \leq 2.5\ln n\cdot n^{-3} \leq n^{-2}
        \]
        since $n > 2.5\ln n$ for all $n \geq 2$.
    \end{proof}
    
    We now consider a variation called the \emph{roll call process}, 
    where every agent starts with a $\Met$ containing a single entry: their unique ID. 
    The agents update with $a.\Met\gets(a.\Met\cup b.\Met)$. 
    Let $R_n$ be the number of interactions to reach the terminal configuration where $\mathtt{a.set}$ contains all $n$ IDs for every $a\in\Agents$.
    
    Again, we will consider processes that are stochastically dominated by $R_n$. We can view the roll call process as the spreading of $n$ epidemics in parallel. Note that the roll call process as described takes exponential states, but it also gives an upper bound for any constant number of epidemics spreading in parallel. We find that asymptotically $R_n$ is 1.5 times larger than $T_n$.
    This result about the expected value was shown independently in \cite{casteigts2020sharp, RandomExchangesMoon,RandomExchangesSteele,RandomExchangesHaigh}.
    The results we give here use a different technique which also gives our required large deviation bounds on the time for the roll call process.
 
    \begin{lemmarep}
    \label{lem:parallel:epidemics}
        Let $R_n$ be the number of interactions for the roll call process to complete. Then $\IE[R_n] \sim 1.5n\ln n$. Also $\IP[R_n > 3n\ln n] < \frac{1}{n}$.
    \end{lemmarep}
    
    \begin{proof}
        Notice that in the roll call process, each individual ID spreads as a two-way epidemic. Thus we have $n$ epidemic processes happening in parallel; however they are not independent.
    
        We start by observing a lower bound for $\IE[R_n]$.
        
        First it is necessary for every agent to have an interaction. Let $\IE_1$ be the expected number of interactions for every agent to interact. This is a coupon collector process where we select two agents (coupons) at each step. It follows from a standard coupon collector analysis that $\IE_1\sim\frac{1}{2}n\ln n$.
        
        It is then necessary for the last agent to be picked to spread their ID to the whole population. Let $\IE_2$ be the expected number of interactions for this ID to spread to the whole population,
        starting from this agent's first interaction.
        This is a standard epidemic process (starting with two infected agents, which is an asymptotically negligible difference), 
        so by Lemma~\ref{lem:epidemic:upper:tail} $\IE_2\sim n\ln n$ interactions. 
        Then $\IE[R_n] \geq \IE_1 + \IE_2 \sim 1.5n \ln n$.
        (Note that the entire process may still be incomplete by this point.)
        
        Now we can get an upper tail bound on $R_n$ by considering it as the maximum of $n$ (non independent) epidemic processes. 
        Taking the union bound with Lemma~\ref{lem:epidemic:upper:tail} gives
        \[
            \IP[R_n > (1 + \delta)\IE[T_n]]\leq n\cdot 2.5\ln n \cdot n^{-2\delta}
        \]
        and then taking $\delta = \frac{1}{2} + u$ for $u>0$ we have
        \[
            \IP\qty[R_n > \qty(1.5 + u)\IE[T_n]]\leq 2.5\ln n \cdot n^{-2u}
        \]
        Now since $R_n\geq 0$ we can compute $\IE[R_n]$ as
        \begin{align*}
            \IE[R_n]
        &=
            \int_{t=0}^\infty \IP[R_n > t]\ dt
        \\&\leq
            \int_{t=0}^{1.5\IE[T_n]} 1\ dt + \int_{1.5\IE[T_n]}^\infty \IP[R_n > t]\ dt
        \\&=
            1.5\IE[T_n] + \frac{1}{\IE[T_n]}\int_{0}^\infty \IP\qty[R_n > \qty(1.5 + u)\IE[T_n]]\ du
        \\&\leq
            1.5\IE[T_n] + \frac{1}{\IE[T_n]}\int_{0}^\infty 2.5\ln n \cdot n^{-2u}\ du
        \\&=
            1.5\IE[T_n] + \frac{2.5}{\IE[T_n]}\cdot-\frac{1}{2}n^{-2u}\Big|_0^{\infty}
        \\&=
            1.5\IE[T_n] + \frac{1.25}{\IE[T_n]} \sim 1.5n \ln n
        \end{align*}
        Thus we have $\IE[R_n] \sim 1.5n\ln n$.
        
        The observation that $\IP[R_n > 3n \ln n] < \frac{1}{n}$ then follows immediately from the same union bound and Corollary~\ref{cor:epidemic:upper:tail}.
    \end{proof}

    We next consider another variation called the \emph{bounded epidemic process}. Here some source agent $s$ has $s.\level = 0$, and when agents $a, b$ interact, $a$ updates as $a.\level \gets \min(a.\level, b.\level + 1)$ (and symmetrically for $b$). Let $a$ be a fixed target agent. We define the time $\tau_k$ to be the first time that $a.\level \leq k$.
    Intuitively, $\tau_k$ is the time at which $a$ hears information from the source $s$ through a path of length at most $k$.
    For example, $\tau_1$ is the time until $a$ and $s$ interact directly,
    and $\tau_2$ is the time until $a$ interacts with some agent who has already interacted with $s$.
    
    \begin{lemma}
    \label{lem:bounded-epidemic-constant}
    For any constant $k = O(1)$, $\IE[\tau_k] \leq kn^{1/k}$ and $\tau_k \leq n^{1/k}\cdot \frac{c}{2}\ln n$ with high probability $1 - 1/n^c$.
    \end{lemma}
    
    \begin{proof}
        For each $i = 0, 1,\ldots, k-1$, define $L_i$ to be the first time when the count of agents with $\level \leq i$ is at least $n^{i/k}$. Note $L_0 = 0$, since at time $0$ (and all future times) we have a single source agent at $\level = 0$.
        We will now show inductively that $\IE[L_i] \leq k n^{1/k}$ and $L_i = O(k n^{1/k})$.
        
        After time $L_i$, we have at least $n^{i/k}$ agents with $\level \leq i$. We next need to wait until time $L_{i+1}$, when at least $n^{(i+1)/k}$ agents have $\level \leq i+1$. Let $x(t)$ be the number of agents with $\level \leq i+1$ at time $t$, so $x(L_i) \geq n^{(i+1)/k}$. Then the probability $x$ increases in the next interaction is at least 
        \[
            p_x = \frac{\#(\level \leq i) \cdot \#(\level > i+1)}{\binom{n}{2}} \geq \frac{2 n^{i/k} (n-x)}{n^2}
            \geq \frac{n^{i/k}}{n},
        \]
        since $x \leq n^{i+1}/k \leq n^{(k-1)/k} < \frac{n}{2}$.
        Now the number of interactions $T = n (L_{i+1} - L_i)$ until time $L_{i+1}$ is bounded by a sum of independent geometric random variables with probability $p_x$, as $x$ ranges from $n^{i/k}$ to $n^{(i+1)/k}$. This gives
        \[
            \IE[T] \leq \sum_{x = n^{i/k}}^{n^{(i+1)/k}}\frac{n}{n^{i/k}}
            \leq 
            n^{(i+1)/k} \cdot \frac{n}{n^{i/k}} = n^{1+1/k}.
        \]
        Moving from interactions to parallel time, we get $\IE[L_{i+1} - L_i] \leq n^{1/k}$, so $\IE[L_{i+1}] \leq (i+1)n^{1/k}$ as desired.
        
        We can also get a high probability bound on $T$, using the Chernoff bound variant for independent geometric random variables from \cite{janson2018tail}, which will show that $L_i = O(kn^{1/k})$ with high probability. This will end up being negligible compared to the $O(\log n \cdot n^{1/k})$ high probability bound we need for the last step, so we omit the details here.
        
        After time $L_{k-1}$, we have a count $n^{(k-1)/k}$ agents with $\level \leq k-1$, and now must wait for the target agent $a$ to meet one of these agents, which will ensure $a.\level \leq k$. The number $T$ of required interactions is a single geometric random variable with probability $p \geq \frac{n^{(k-1)/k} \cdot 1}{\binom{n}{2}} \geq \frac{2}{n^{1+1/k}}$. Thus $\IE[T] \leq \frac{1}{2}n^{1+1/k}$, giving less than $n^{1/k}$ time, so $\IE[\tau_k] \leq kn^{1/k}$. For the high probability bound, 
        \[
        \IP[T \geq \frac{c}{2} \ln n n^{1+1/k}] \leq (1-p)^{c \ln n \cdot n^{1+1/k}} 
        \leq \exp(-\frac{2}{n^{1+1/k}} \cdot \frac{c}{2} \ln n \cdot n^{1+1/k})
        = n^{-c}.
        \]
        Since this bound is asymptotically larger than the high probability bounds for the inductive argument on earlier levels, we have $\tau_k \leq \frac{c}{2}n^{1/k}\ln n$ with high probability $1-1/n^c$.
    \end{proof}
    
    The following lemma is similar to Lemma~\ref{lem:bounded-epidemic-constant} but for $k = \Theta(\log n)$ rather than $k=O(1)$.
    
    \begin{lemma}
    \label{lem:bounded-epidemic-log-n}
    For $k = 3\log_2 n$, $\tau_k \leq 3\ln n$, in expectation and with probability $1-O(1/n^2)$.
    \end{lemma}

    \begin{proof}
        Consider the standard epidemic process starting at the source $s$. By Corollary~\ref{cor:epidemic:upper:tail}, it takes at most $3\ln n$ time to complete with high probability $1-O(1/n^2)$. We consider this epidemic process as generating a random tree in the population as follows. The source $s$ is the root, labeled 1. Then each other agent is a vertex whose parent is the agent who infected them. Furthermore, we can label each vertex by the order in which that agent was infected. This process will exactly create a uniform random recursive tree, as discussed in \cite{Drmota09}. By Theorem 3 in \cite{drmota2009height}, this tree has expected height $\IE[H] = e \ln n$, and there is a high probability bound $\IP[|H-E[H]| \geq \eta] \leq e^{-c\eta}$ for some constant $c$, which implies that $H = O(\log n)$ with high probability $1-1/n^{a}$ for any desired constant $a$.
        
        We then set $k$ to be this bound on the height $H$. It follows that when agent $a$ gets infected in the full epidemic process, $a$ also has $\level \leq k$, since the whole tree has height $\leq k$.
        
        It is also possible to analyze the depth of agent $a$ in this random tree more directly. In the worst case, $a$ has label $n$ (it is the last to be infected). Then for a vertex with label $i$, their parent's label is uniform in the range $\{1,\ldots, i-1\}$, since at the time this agent gets infected, who infected them has uniform probability over all currently infected agents. Thus we get a recursive process, where $T_0 \leq n$, and $T_i = \text{Uniform}(\{1,\ldots,T_{i-1} - 1\})$, giving a sequence of random variables. We have $\IE[T_i | T_{i-1}] = \frac{1 + (T_{i-1} - 1)}{2} = \frac{T_{i-1}}{2}$ when $T_{i-1} > 1$, and the depth of $a$ is $d = \min\{i:T_i = 1\}$.
        
        To get a high probability bound on $d$, we will formally define $T_i = \frac{1}{2}T_{i-1}$ for all $i > d$. This way, the recurrence $\IE[T_i | T_{i-1}] = \frac{T_{i-1}}{2}$ holds for all $i$. Unwrapping the recurrence, we get $\IE[T_i] = \frac{1}{2^i}T_0 \leq \frac{n}{2^i}$. Then setting $i = 3\log_2 n$ gives $\IE[T_{3\log_2 n}] \leq \frac{n}{n^3} = \frac{1}{n^2}$, and by Markov's Inequality, $\IP[T_{3\log_2 n} \geq 1] \leq \frac{1}{n^2}$. Thus choosing $k = 3\log_2 n$ ensures $a$ has depth $\leq k$ in the epidemic tree with probability $1-1/n^2$. Then using the union bound with the event the epidemic takes longer than $3\ln n$ time, we have $\tau_k \leq 3\ln n$ with probability $1-O(1/n^2)$.
    \end{proof}

\end{toappendix}

\section{Resetting subprotocol}
\label{sec:resetting-subprotocol}

\propagateresetprotocol\ (Protocol \ref{algo:propag-reset}) is used as a subroutine in both of our protocols \silentlinearTimeStateProtocol\ (Sec.~\ref{sec:silent-linear-time-state}) and \logTimeProtocol\ (Sec.~\ref{sec:log-time-protocol}).
Intuitively, it provides a way for agents 
(upon detecting an error that indicates the starting configuration was ``illegal'' in some way) 
to ``reset'' quickly, after which they may be analyzed as though they began from the reset state.
For that, the protocol $\resetprotocol$ 
has to be defined for use by $\propagateresetprotocol$.
We assume that $\resetprotocol$ changes the $\roleself$ variable to something different from $\resetting$.
Crucially, after the reset,
agents have no information about whether a reset has happened and do not attempt any synchronization to ensure they only reset once,
lest the adversary simply sets every agent to believe it has already reset, preventing the necessary reset from ever occurring.\footnote{
    This is unlike in standard population protocol techniques in which ``phase information'' is carried in agents indicating whether they are encountering an agent ``before'' or ``after'' a new phase starts.
}

We now define some terms used in the analysis of
$\propagateresetprotocol$, and their intuition:

If $a.\roleself \neq \resetting$, then we say 
$a$ is \emph{computing} (it is executing the outside protocol).
Otherwise, for $a.\roleself=\resetting$, we use three terms. If $a.\resetcount = \resetcountmax$, we say $a$ is \emph{triggered} (it has just detected an error and initiated this global reset).
If $a.\resetcount > 0$ we say $a$ is \emph{propagating} (intuitively this property of positivity spreads by epidemic to restart the whole population; we also consider triggered agents to be propagating).
If $a.\resetcount = 0$, we say $a$ is \emph{dormant} (it is waiting for a delay to allow the entire population to become dormant before they start waking up, this prevents an agent from waking up multiple times during one reset).

Likewise, we will refer to a configuration as \emph{fully / partially} propagating (resp.\ dormant, computing, triggered) if all / some agents are propagating (resp.\ dormant, computing, triggered).

A configuration $C$ is \emph{awakening} if it is the first partially computing configuration reachable from a fully dormant configuration.
Protocols that use \propagateresetprotocol\ will start their analysis by reasoning about an awakening configuration, which formalizes the idea of having gone through a ``clean reset''.
In an awakening configuration, all agents are dormant except one agent who has just executed \resetprotocol. Computing agents will awaken dormant agents by epidemic, so within $O(\log n)$ time, all agents will have executed \resetprotocol\ once and then be back to executing the main algorithm.

\begin{protocol}[H]
\caption{\propagateresetprotocol($a$,$b$), for $\resetting$ agent $a$ interacting with agent $b$.
\\
\textbf{Fields:} If $\role = \resetting$, $\resetcount\in\{0,1,\ldots,\resetcountmax\}$ and when $\resetcount=0$ an additional field $\delay\in\{0,1,\ldots,\delaymax\}$.
}
\label{algo:propag-reset}
\begin{algorithmic}[1]
    \If{$a.\resetcount>0$ and $b.\role \neq \resetting$} 
        \quad\Comment{bring $b$ into $\resetting$ role}
            \State{$b \gets \resetting,
            \quad 
            b.\resetcount \gets 0,
            \quad
            b.\delay \gets \delaymax$}
            \label{algo:reset:other-assigned-to-resetting}
        \EndIf
    \If{$b.\role = \resetting$}
        \quad\Comment{change $\resetcount$}
        \State{$a.\resetcount,b.\resetcount \gets \max(a.\resetcount-1, b.\resetcount-1, 0)$}
        \label{algo:reset:resetcount:update}
    \EndIf
    
    \For{$i\in\{a,b\}$ with $i.\role = \resetting$ and $i.\resetcount = 0$}
        \quad\Comment{dormant agents}
        \If{$i.\resetcount$ just became 0}
            \quad\Comment{initialize $\delay$}
            \State{$i.\delay \gets \delaymax$}
        \Else
            \State{$i.\delay \gets i.\delay - 1$ }\label{algo:reset:delay:update}
        \EndIf
        \If{$i.\delay=0$ or $b.\role\neq\resetting$}
        \Comment{awaken by epidemic}
        \label{algo:if-test-for-executing-reset}
            \State execute \resetprotocol($i$)
            \qquad \Comment{\resetprotocol\ subroutine provided by main protocol}
            \label{algo:execute-reset}
        \EndIf
    \EndFor

\end{algorithmic}
\end{protocol}

We require $\resetcountmax=\Omega(\log n)$, and for our protocol will choose the concrete value $\resetcountmax = 60\ln n$. We also require $\delaymax=\Omega(\resetcountmax)$. For our $O(\log n)$ time protocol \logTimeProtocol, we have $\delaymax = \Theta(\log n)$. In $\silentlinearTimeStateProtocol$, we set $\delaymax = \Theta(n)$, to give enough time for the dormant agents to do a slow leader election so they finish reset with a unique leader.

\propagateresetprotocol\ begins by some agent becoming triggered ($\resetcount=\resetcountmax$). 
Although introduced for a different purpose, \propagateresetprotocol\ is essentially equivalent to a subprotocol used in~\cite{DBLP:conf/dna/AlistarhDKSU17}, so we adopt their time analysis to prove it completes in $O(\log n)$ time.
Briefly, from a partially triggered configuration, the propagating condition ($\resetcount > 0$) spreads by epidemic (in $O(\log n)$ time) (Lemma~\ref{lem:triggered-to-fully-propagating}). 
Once the configuration is fully propagating, it becomes fully dormant in $O(\log n)$ time (Lemma~\ref{lem:fully-propagating-to-fully-dormant}).
From the fully dormant configuration, we reach an awakening configuration within $O(\log n)$ time when the first agent executes $\resetprotocol$ (Theorem~\ref{thm:reset-time}). Then the instruction to execute $\resetprotocol$ spreads by epidemic (in $O(\log n)$ time). \opt{append}{The proofs of \propagateresetprotocol are given in Section~\ref{sec:propagate-reset-proofs}}

\begin{toappendix}

\subsection{Proofs for {\mdseries \propagateresetprotocol}}
\label{sec:propagate-reset-proofs}

We first observe (by lines~\ref{algo:reset:other-assigned-to-resetting} and \ref{algo:reset:resetcount:update} of \propagateresetprotocol) that we can analyze the $\resetcount$ field a using the definition from \cite{DBLP:journals/tcs/SudoOKMDL20} of a \emph{propagating variable} (that updates as $a,b\gets \max(a-1,b-1,0)$):

\begin{observation}
\label{obs:resetcount-is-propagating-variable}
    If we define the $\resetcount$ field for all agents by letting $a.\resetcount = 0$ for any computing agent $a$ ($a.\roleself \neq \resetting$), then in any interaction between $a,b\in\Agents$, their $\resetcount$ fields both become $\max(a.\resetcount-1,b.\resetcount-1,0)$.
\end{observation}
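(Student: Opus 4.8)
The plan is a direct exhaustive case analysis over the branches of \propagateresetprotocol, using two facts: (i) \propagateresetprotocol\ is invoked on an interacting pair $(a,b)$ only when at least one of them has $\role = \resetting$ (it is called from the ``else'' branch of \silentlinearSSLE, i.e.\ line~\ref{algo:lineartime:trigger-reset2}, and the analogous branch of \smallstateSSLE, always with a $\resetting$ agent as first argument), and (ii) by the convention stated in the observation a computing agent has $\resetcount = 0$, and this holds both immediately before the interaction and immediately after, since the only way a computing agent's status changes here is through a call to \resetprotocol, which sets it to $\role = \collecting$ (hence conventional $\resetcount = 0$ again).

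First I would dispose of the case where neither $a$ nor $b$ is $\resetting$: then \propagateresetprotocol\ is not executed at all, both agents stay computing, and their conventional $\resetcount$ values remain $0 = \max(0-1,\,0-1,\,0)$, as required. Otherwise, by the invocation convention I may assume the first argument $a$ satisfies $a.\role = \resetting$, and split on $b$. If $b.\role = \resetting$ as well, line~\ref{algo:reset:other-assigned-to-resetting} is skipped and line~\ref{algo:reset:resetcount:update} fires, literally assigning $\max(a.\resetcount-1,\,b.\resetcount-1,\,0)$ (with pre-interaction values) to both fields. If $b$ is computing, I split on $a.\resetcount$: when $a.\resetcount > 0$, line~\ref{algo:reset:other-assigned-to-resetting} pulls $b$ into the $\resetting$ role with $b.\resetcount \gets 0$ (its conventional pre-interaction value), after which line~\ref{algo:reset:resetcount:update} again assigns $\max(a.\resetcount-1,\,b.\resetcount-1,\,0)$ to both; when $a.\resetcount = 0$, lines~\ref{algo:reset:other-assigned-to-resetting} and~\ref{algo:reset:resetcount:update} are both skipped, $a$ enters the dormant-agent loop and (because $b$ is not $\resetting$) executes \resetprotocol, so $a$ becomes computing with conventional $\resetcount = 0$ while $b$ is untouched; here the target value is $\max(-1,-1,0) = 0$, which matches.

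The only step requiring genuine care is verifying that the trailing dormant-agent loop (lines~\ref{algo:if-test-for-executing-reset}--\ref{algo:execute-reset}) never perturbs the value already installed by line~\ref{algo:reset:resetcount:update}. An agent is processed by that loop only if its $\resetcount$ equals $0$ after line~\ref{algo:reset:resetcount:update}, i.e.\ only when $\max(a.\resetcount-1,\,b.\resetcount-1,\,0) = 0$ for the pre-interaction values; and the only effect the loop can have on $\resetcount$ is, via \resetprotocol, to switch the agent to $\role = \collecting$, whose conventional $\resetcount$ is again $0$. Hence in every case both agents finish the interaction with $\resetcount$ equal to $\max(a.\resetcount-1,\,b.\resetcount-1,\,0)$ evaluated at the pre-interaction values, which is exactly the claim. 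I expect this to be a short, mechanical argument; the ``obstacle'', such as it is, is purely keeping track of the order in which the pseudocode lines act and the fact that the conventional assignment $\resetcount = 0$ for computing agents is preserved across a call to \resetprotocol.
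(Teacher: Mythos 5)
Your case analysis is correct and takes essentially the same approach as the paper, which offers no written proof at all and simply asserts the observation ``by lines~\ref{algo:reset:other-assigned-to-resetting} and~\ref{algo:reset:resetcount:update} of \propagateresetprotocol''; your exhaustive inspection of those lines, together with the check that the dormant-agent loop and \resetprotocol\ only ever replace a $\resetcount$ value that is already $0$ by the conventional $0$, is exactly the intended argument. The one caveat---shared equally by the paper's own statement---is that interactions in which an agent becomes triggered ($\resetcount \gets \resetcountmax$, e.g.\ via a name collision in \silentlinearSSLE\ or an expired \errorprotocol\ counter) do not follow the $\max$ formula, and the paper copes with this by invoking the observation only under the explicit assumption that no triggering occurs.
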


    \begin{lemmarep}\label{lem:triggered-to-fully-propagating}
        Using the specific value $\resetcountmax = 60\ln n$, starting from a partially triggered configuration, we reach a fully propagating configuration after at most $4 \ln n$ time with probability at least $1-O(1/n)$.
    \end{lemmarep}
    \begin{proof}
    Noting that $\resetcount$ is a propagating variable~\cite{DBLP:journals/tcs/SudoOKMDL20}, we can
    use the same proof as \cite[Corollary~8]{DBLP:journals/tcs/SudoOKMDL20}.
    The constant $\resetcountmax = 60\ln n$ matches the constant used in their result.
    \end{proof}
    
    Although we set $\resetcountmax = 60 \ln n$, 
    the following lemma holds for arbitrary positive $\resetcountmax$.
    
    \begin{lemmarep}\label{lem:fully-propagating-to-fully-dormant}
        Let $\resetcountmax \in \mathbb{N}^+$
        and
        $\delaymax = \Omega(\log n + \resetcountmax)$.
        Starting from a fully propagating configuration, we reach a fully dormant configuration after $O(\log n + \resetcountmax)$ time with high probability $1-O(1/n)$.
    \end{lemmarep}
    
    An equivalent process was analyzed in \cite{DBLP:conf/dna/AlistarhDKSU17}, and this proof follows Lemma 1 from \cite{DBLP:conf/dna/AlistarhDKSU17}.
    
    \begin{proof}
    We first assume no agents are computing, so no further agents will become triggered, and the $\resetcount$ field will only change as noted in Observation~\ref{obs:resetcount-is-propagating-variable}. We define a local potential $\Phi_t(a)=3^{a.\resetcount}$ for agent $a$ in $C_t$, except $\Phi_t(a)=0$ if $a.\resetcount=0$. We then define a global potential $\Phi_t=\sum_{a\in\Agents}\Phi_t(a)$.
    We assume as a worst case that every agent starts with $\resetcount=\resetcountmax$, so $\Phi_0\leq n3^{\resetcountmax}$.  The goal will now be to show this global potential drops quickly to $0$ (corresponding to a fully dormant configuration). We will show that $\Phi_t=0$ with high probability $1-O(1/n)$, where 
    $t = O(n(\log n + \resetcountmax))$.
    
    Now from Observation~\ref{obs:resetcount-is-propagating-variable}, if agents $a$ and $b$ interact in the $t^\text{th}$ interaction, we can further observe that
    $$\Phi_{t+1}(a)+\Phi_{t+1}(b)\leq 2/3\cdot(\Phi_t(a)+\Phi_t(b))$$
    (with equality when $\Phi_t(a)=0$ or $\Phi_t(b)=0$). We will thus have the change in global potential
    $$\Phi_{t+1}-\Phi_t\leq -1/3\cdot(\Phi_t(a)+\Phi_t(b))$$
    Then conditioning on the configuration $C_t$, we can compute
    \begin{align*}
    \IE[\Phi_{t+1}-\Phi_t|C_t]
    &\leq\sum_{\{a,b\}\in\Agents}\IP[a,b\text{ interact in $t^\text{th}$ interaction}]\cdot(-1/3)\cdot(\Phi_t(a)+\Phi_t(b)) \\
    & = \frac{1}{\binom{n}{2}}\cdot(-1/3)\sum_{a\in\Agents}(n-1)\Phi_t(a)
    = -\frac{2}{3n}\Phi_t
    \end{align*}
    and thus $\IE[\Phi_{t+1}|C_t]\leq(1-\frac{2}{3n})\Phi_t$, so $\IE[\Phi_{t+1}]=\IE[\IE[\Phi_{t+1}|C_t]]\leq(1-\frac{2}{3n})\IE[\Phi_t]$. Then by induction we have
    \[
    \IE[\Phi_t]
    \leq 
    \bigg(1-\frac{2}{3n}\bigg)^t\Phi_0
    \leq 
    \exp\bigg(-\frac{2}{3n}t\bigg) n3^{\resetcountmax}
    \]
    Thus we have $\IE[\Phi_t]\leq\frac{1}{n}$ for $t=\frac{3}{2}n\ln(n^2\cdot3^{\resetcountmax})=\frac{3}{2}n(2\ln n+ \resetcountmax\cdot\ln 3) = O(n(\log n + \resetcountmax))$, so by Markov's inequality we have $\IP[\Phi_t=0]\geq 1-\frac{1}{n}$.
    
    We assumed that no agents are computing during these $t$ interactions. For the first agent to become computing, $\delay$ must hit $0$ starting from $\delaymax$ (after that agent has become dormant). Choosing $\delaymax = \Omega(\log n + \resetcountmax)$, then no agent will have more than $\delaymax$ interactions during these $t$ interactions, with high probability $1-O(1/n)$ by standard Chernoff bounds. Thus all agents will become dormant before the first agent becomes computing with high probability.
    \end{proof}
    
    We now combine the previous lemmas to describe the behavior of $\propagateresetprotocol$ when initialized by a triggered agent. Recall that an awakening configuration is the first partially computing configuration reached from a fully dormant configuration, so the first agent has set $\delay = 0$ and executed $\resetprotocol$.
    
    \begin{theoremrep}
        \label{thm:reset-time}
        Let $\resetcountmax = 60 \ln n$ 
        and 
        $\delaymax = \Omega(\log n + \resetcountmax)$.
        Starting from a partially-triggered configuration, we reach an awakening configuration in at most $O(\delaymax)$ parallel time with probability at least $1-O(1/n)$.
    \end{theoremrep}
    
    \begin{proof}
        By Lemmas~\ref{lem:triggered-to-fully-propagating} and
        \ref{lem:fully-propagating-to-fully-dormant},
        we reach a fully dormant configuration after $O(n(\log n + \resetcountmax))$ interactions.
        
        Next, after an additional $n\frac{\delaymax}{2}$ interactions (which each include $2$ of the $n$ agents), by Pigeonhole Principle some agent must have participated in $\delaymax$ interactions. This agent had $\delay$ at most $\delaymax$ in the fully dormant configuration, so by lines~\ref{algo:reset:delay:update}-\ref{algo:execute-reset} of $\propagateresetprotocol$ that agent has executed $\resetprotocol$. When the first such agent executes $\resetprotocol$, we reach an awakening configuration.
    \end{proof}
    
    Theorem~\ref{thm:reset-time} describes the intended behavior of $\propagateresetprotocol$. This will let protocols start their analysis from an awakening state, and show that given this ``clean reset'' the protocol will then stabilize. The remaining step in the analysis will be to show from any configuration, we quickly become triggered and enter $\propagateresetprotocol$ (or simply stabilize). It is possible that this initial configuration includes some agents in arbitrary $\resetting$ states. The following Corollary will let us only have to reason about starting from a fully computing configuration (so we can essentially assume WLOG that our initial configuration is fully computing, since we will quickly leave the $\resetting$ states).
    
    \begin{corollaryrep}
    \label{cor:reset-fully-computing}
        Starting from any configuration, we reach either an awakening configuration or a fully computing configuration after $O(\log n + \delaymax)$ parallel time with probability at least $1-O(1/n)$.
    \end{corollaryrep}
    
    \begin{proof}
        If at any point we enter a partially-triggered configuration, then the result follows from Theorem~\ref{thm:reset-time}. Now we will show that if we do not enter a partially-triggered configuration, we must reach some fully computing configuration.
        
        We consider a non-fully-computing configuration that is also not triggered (some agents are in $\resetting$ state, but none have $\resetcount=\resetcountmax$).
        In this case, following the proof of Lemma~\ref{lem:fully-propagating-to-fully-dormant}, every agent will become dormant (or computing) after $O(n(\log n + \resetcountmax))$ interactions with probability $1-O(1/n)$. Then as in the proof of Theorem~\ref{thm:reset-time}, some agent will become computing after $O(n\delaymax)$ interactions. Once some agent is computing, the process for the whole population to become computing is a superepidemic, so by Corollary~\ref{cor:epidemic:upper:tail}, this takes at most $3n\ln n$ with high probability $1-O(1/n^2)$.
    \end{proof}
    
\end{toappendix}

\section{Linear-time, linear-state, silent protocol}
\label{sec:silent-linear-time-state}

In this section, we present a silent self-stabilizing ranking protocol, \silentlinearTimeStateProtocol, which achieves asymptotically optimal $O(n)$ time and state complexity. Like \OriginalSSLE, there will be a unique stable and silent configuration where every agent has a unique rank, but now a rank collision will trigger our \propagateresetprotocol, causing the entire population to reset.
The key idea behind \silentlinearTimeStateProtocol\ is to add a large delay $\delaymax = \Theta(n)$ in the \propagateresetprotocol, which will ensure that the entire population is dormant for long enough to do a simple slow leader election 
via $L,L \to L,F$, where all agents set themselves to $L$ upon entering the \resetting\ role.
Thus after the population has undergone a reset, we have a unique leader with high probability. After this reset, we do a linear-time leader-driven ranking, where the ranks correspond to nodes in a full binary tree rooted at the leader.
In this ranking algorithm,
each agent that has been assigned a rank 
(starting with the leader) 
assigns ranks directly to 0, 1, or 2 other agents 
(depending on its number of children in the tree).

In more detail,
each agent can be classified into three roles: 
$\settled$, 
$\unsettled$, and 
$\resetting$.
A $\settled$ agent has the field $\rank \in \{1,2,...,n\}$. 
On the other hand, an $\unsettled$ agent has no rank, and it waits for the assignment of a rank from $\settled$ agents.

We use the subprotocol \propagateresetprotocol\ described in Section~\ref{sec:resetting-subprotocol} to reset each agent when detecting errors. 
For \silentlinearTimeStateProtocol,
the resetting process is triggered under two different situations. 
1) Two $\settled$ agents have an identical rank. 
The rank conflict can be detected when the two agents interact. 
2) An $\unsettled$ agent does not get its rank after $\Theta(n)$ interactions.

During the dormant phase of \propagateresetprotocol,
lasting for $\Theta(n)$ time in this protocol,
we do slow leader election via $L,L \to L,F$.
Upon awakening (calling $\resetprotocol$), 
the (likely unique) leader $L$ is
$\settled$ with $\rank=1$
and followers $F$ are $\unsettled$.
Thus, after resetting, with high probability there will be exactly one $\settled$ agent with $\rank = 1$, and all the other agents are $\unsettled$. 
The $\settled$ agent will act as a leader to assign ranks to all $\unsettled$ agents in the following way.
At this point the protocol executes an initialized ranking algorithm, similar to others in the renaming literature~\cite{alistarh2010fast, alistarh2014balls}.
Intuitively, a full binary tree forms within the population.
Each $\settled$ agent recruits at most two $\unsettled$ agents,
assigning them ranks based on its own to guarantee uniqueness.
The children of rank $i$ are $2i$ and $2i+1$; in other words if an agent's rank has binary expansion $s$, its childrens' ranks have binary expansions $s0$ and $s1$.
Since each agent knows the exact population size, each knows whether its rank corresponds to a node with 0, 1, or 2 children in the full binary tree with $n$ nodes. See Figure~\ref{fig:binary-tree} for an example.
This process clearly terminates when all agents are recruited and become settled into different ranks.

\begin{figure}[h]
    \centering
    \ifarticle
        \includegraphics[width=3.8in]{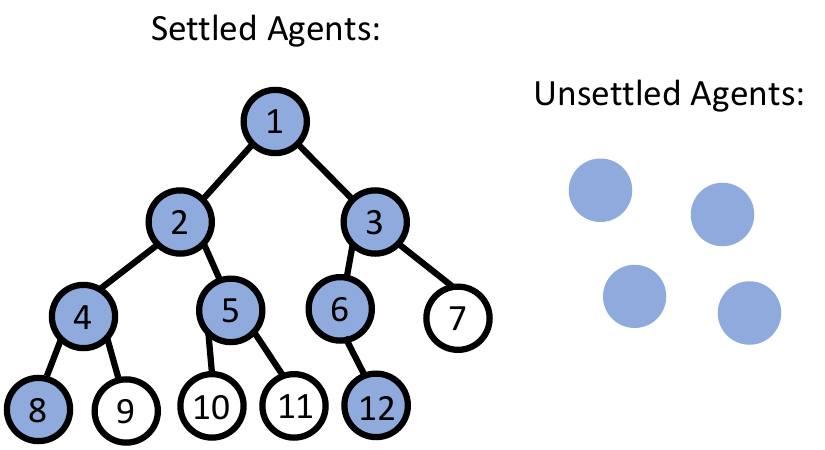}
    \else
        \includegraphics[width=3.0in]{Sections/figures/BinaryTreeFig.pdf}
    \fi
    \caption{An example of the rank assignment in \silentlinearTimeStateProtocol\ with $n=12$ agents. There are $8$ settled agents on the left (blue circles), with ranks given by the numbers. There are 4 ranks in the binary tree left to be filled by the unsettled agents, when they interact with the settled agents with ranks 3,4 or 5. Lemma~\ref{lem:binary-tree-rank-assignment} shows this process completes in expected $\Theta(n)$ time.}
    \label{fig:binary-tree}
\end{figure}

\silentlinearTimeStateProtocol\ takes linear time by the following high-level argument:
If there is a rank collision, this is detected in $O(n)$ time.
If any agent remains $\unsettled$ without a rank,
this is detected via counting up to $\errorcount$ in $O(n)$ time.
Either of these triggers a call to $\propagateresetprotocol$.
Upon exiting and reaching a fully computing configuration,
taking $O(n)$ time by Corollary~\ref{cor:reset-fully-computing}, we have the leader-driven ranking protocol analyzed in Lemma~\ref{lem:binary-tree-rank-assignment}.
There is a constant probability the slow leader election fails (i.e., we end up with multiple leaders),
but the expected number of times we must repeat this process before getting a unique leader is constant.
The fact that this ranking protocol is $O(n)$ time follows by analyzing each level of the binary tree created across the population: 
each level takes time proportional to the number of nodes in the level,
whence the time is proportional to the size of the tree, i.e., $O(n)$.



\begin{protocol}[H]
\caption{\silentlinearTimeStateProtocol, for initiator $a$ interacting with responder $b$
\\
\textbf{Fields:} $\roleself \in \{ \settled, \unsettled, \resetting\}$
\\
If $\roleself = \settled$, $\rank\in\{1,...,n\}$, $\children \in \{0, 1, 2\}$
\\
If $\roleself = \unsettled$, $\errorcount\in \{0,1,...,\errorcountmax\}$
\\
If $\roleself = \resetting$, $\leader \in \{L, F\}$, $\resetcount \in \{1,\ldots,\resetcountmax\}$, $\delay \in \{0, 1, \ldots, \delaymax = \Theta(n)\}$
}
\label{algo:silent-linear-time-state}
\begin{algorithmic}[1]
\If{$a.\role = \resetting$ or $b.\role = \resetting$}
    \State{execute \propagateresetprotocol(a,b)}
    \If{$a.\leader = L$ and $b.\leader = L$}
        \State{$b.\leader \gets F$}
        \label{line:slow-leader-election}
    \EndIf
\EndIf
\If{$a.\role = b.\role = \settled$ and $a.\rank = b.\rank$}
    \State{$a.\role,b.\role \gets \resetting, 
            \quad 
            a.\resetcount,b.\resetcount \gets \resetcountmax$}
    \State{$a.\leader, b.\leader \gets L$}
\EndIf

\For{$(i,j)\in\{(a,b),(b,a)\}$}
\label{line:binary-tree-rank-assignment-start}
    \If{$i.\role = \settled$, $j.\role = \unsettled$, $i.\children < 2$, and $2 \cdot i.\rank + i.\children < n$}
        \State{$j.\role \gets \settled, \quad j.\children \gets 0$}
        \State{$j.\rank \gets 2 \cdot i.\rank + i.\children$}
        \Comment{$j$ becomes a child node of $i$}
        \State{$i.\children \gets i.\children + 1$}
        \label{line:binary-tree-rank-assignment-end}
    \EndIf
\EndFor

\For{$i \in \{a,b\} $}
    \If{$i.\role = \unsettled$}
        \State{$i.\errorcount \gets \max(i.\errorcount - 1, 0)$}
        \If{$i.\errorcount = 0$}
            \State{$a.\role, b.\role \gets \resetting, 
            \quad 
            a.\resetcount,b.\resetcount \gets \resetcountmax$}
            \State{$a.\leader, b.\leader \gets L$}
        \EndIf
    \EndIf
\EndFor

\end{algorithmic}
\end{protocol}

\begin{protocol}[H]
\caption{\resetprotocol(a)\ for \silentlinearTimeStateProtocol, for agent $a$. 
\\
(Called in line~\ref{algo:execute-reset} of \propagateresetprotocol.)}
\begin{algorithmic}[1]
    \If{$a.\leader = L$}
    \State $a.\roleself \gets \settled,
    \quad 
    a.\rank\gets 1,
    \quad
    a.\children\gets 0$
    \EndIf
    \If{$a.\leader = F$}
    \State{$a.\roleself \gets \unsettled,
    \quad
    a.\errorcount \gets \errorcountmax$}
    \EndIf
\end{algorithmic}\label{algo:reset-silent-linear-time-state}
\end{protocol}

\begin{toappendix}
\subsection{Proofs for \silentlinearTimeStateProtocol}
\label{sec:silent-protocol-proofs}

We first consider the \emph{binary-tree rank assignment} process, which is given by lines~\ref{line:binary-tree-rank-assignment-start} to \ref{line:binary-tree-rank-assignment-end} of \silentlinearTimeStateProtocol, starting from a configuration with one agent $a$ with $a.\role = \settled, a.\leader = L$, $a.\children = 0$ and all other agents $b$ with $b.\role = \unsettled$ (we ignore the field $\errorcount$ in this analysis). In other words, we are starting from a single leader $a$, who builds a binary tree, rooted at $a$ that assigns roles to the entire population.

\begin{lemma}
\label{lem:binary-tree-rank-assignment}
The binary-tree rank assignment process takes expected $O(n)$ time, starting from a single leader and $n-1$ $\unsettled$ agents.
\end{lemma}

\begin{proof}
    We consider the time taken to assign all nodes at level $d$ of the tree,
    assuming all nodes at level $d-1$ have been assigned, showing that this time is $O(2^d)$ 
    (i.e., proportional to the number $2^d$ of nodes at level $d$).
    Thus, even if we consider the stochastically dominating process in which no agent is assigned at level $> d$ until all agents at level $d$ have been assigned,
    the time to complete the tree is 
    $O\left( \sum_{d=1}^{\log n} 2^d \right) = O(n)$.
    
    Assume that level $d-1$ is completely assigned,
    and let $i$
    be the number $i$ of nodes still unassigned at level $d$, with $0 < i \leq 2^d$.
    We will now estimate the probability that the next interaction assigns a new node at level $d$. The count of agents at level $d-1$ with $\children < 2$ is at least $\frac{i}{2}$. To estimate the number of $\unsettled$ agents, for each of the $i$ unassigned nodes at level $d$, we consider the eventually subtree rooted at that node. Each such subtree contains 
    \[
    \sum_{j=0}^{\log n - d} 2^j = 2\cdot 2^{\log n - d} - 1 \leq 2^{\log n - d}
    \]
    nodes. Since the root of the subtree has not been assigned yet,
    none of the nodes in the subtree have been assigned either.
    Thus the number of $\unsettled$ agents is at least $i\cdot n\cdot2^{-d}$. The probability of an $\unsettled$ agent meeting an agent at level $d-1$ with $\children < 2$ is at least 
    \[
    \frac{(i/2)(i\cdot n \cdot 2^{-d})}{\binom{n}{2}}
    \sim \frac{i^2}{n 2^d}.
    \]
    
    The total number of interactions to assign all nodes at level $d$ is stochastically dominated by a sum $T = \sum_{i=1}^n T_i$ of independent geometric random variables, where $T_i$ is a geometric random variable with success probability $\frac{i^2}{n 2^d}$.
    We can then compute
    \[
    \IE[T] = \sum_{i=1}^n \frac{n 2^d}{i^2} \leq n2^d \sum_{i=1}^\infty \frac{1}{i^2} = O(n 2^d).
    \]
    Since $T$ is counting interactions, the expected parallel time to assign all nodes at level $d$ is $O(2^d)$, as desired.
\end{proof}

\begin{lemma}
\label{lem:single-leader}
An awakening configuration has a single leader with constant probability.
\end{lemma}

\begin{proof}
We first analyze the simple leader election process executed in line~\ref{line:slow-leader-election} of \silentlinearTimeStateProtocol.
The process $L,L \to L,F$ completes in number of interactions $\sum_{i=2}^{n} T_i$,
where $T_i$ is a random variable representing the number of interactions to get from $i$ leaders down to $i-1$.
This is geometric with probability of success $\frac{\binom{i}{2}}{\binom{n}{2}} = \frac{i(i-1)}{n(n-1)}$.
Thus the expected number of interactions is
\[
\sum_{i=2}^{n} \frac{n(n-1)}{i(i-1)} 
= (n-1) n \sum_{i=2}^{n} \frac{1}{i-1} - \frac{1}{i} 
= n(n-1) \left[1 - \frac{1}{n} \right] 
\sim n^2,
\]
so the expected parallel time is $\sim n$.

For an awakening configuration to have a single leader, we need this leader election process to finish before the first agent sets $\delay = 0$, which takes $O(n)$ interactions, and will take $O(n)$ parallel time with high probability by standard Chernoff bounds.
By Markov's inequality,
for any $\alpha > 0$,
the probability that the leader election takes longer than $\alpha n$ time is a most $1 / \alpha$.
Thus we have a constant probability of successfully electing a single leader before the first dormant agent counts up to $\delaymax$, 
where the precise constant depends on the choice of $\delaymax = \Theta(n)$.
\end{proof}

\end{toappendix}

The required proofs for \silentlinearTimeStateProtocol\ are given in Section~\ref{sec:silent-protocol-proofs} and yield the following main results:

\begin{theoremrep}
\label{thm:silent-linear-time}
\silentlinearTimeStateProtocol\ is a silent protocol that solves self-stabilizing ranking with $O(n)$ states and $O(n)$ expected parallel time.
\end{theoremrep}
\begin{proof}
By Corollary~\ref{cor:reset-fully-computing}, we only have to consider initial configurations that are fully computing or awakening. First consider any fully computing configuration, and argue until we either reach a partially triggered configuration or the unique stable configuration. After expected $O(n)$ time, any $\unsettled$ agents have had enough interactions to count down to $\errorcount = 0$. By standard Chernoff bounds, this is true after $O(n \log n)$ time with high probability. Thus we have either reached a partially triggered configuration, or have only $\settled$ agents left. Now if this configuration is not the unique silent, stable configuration, there is a rank collision between two $\settled$ agents, which again will be detected in expected $O(n)$ time and $O(n\log n)$ time with high probability. After reaching a partially triggered configuration, by Theorem~\ref{thm:reset-time}, we will then be in an awakening configuration after $O(n)$ time with high probability.

It now remains to analyze the process starting from an awakening configuration. Define an epoch to be the sequence of configurations, starting from an awakening configuration, that ends at either the unique stable configuration or another partially triggered configuration. If it ends in the unique stable configuration, we call the epoch successful. We will now show that an epoch is successful with constant probability, the expected time for an epoch is $O(n)$.

For an epoch to be successful, we first require the awakening configuration to have a single leader, which is true with constant probability by Lemma~\ref{lem:single-leader}. Then given a single leader in an awakening configuration, we can analyze the process of the leader assigning ranks to the rest of the population with the analysis of the binary-tree rank assignment process in Lemma~\ref{lem:binary-tree-rank-assignment}, which shows that it takes expected $O(n)$ time. By Markov's inequality, it will also take $O(n)$ time with constant probability, which will be before any of the $\unsettled$ agents have had enough interactions to reach $\errorcount = 0$ (again by standard Chernoff bounds on the number of interactions for each agent, where the exact probability depends on the initial value $\errorcountmax$). As a result, with constant probability we finish the rank assignment, and reach the unique stable configuration with all $\settled$ agents, having unique ranks.

The expected $O(n)$ time for an epoch follows immediately from Lemma~\ref{lem:binary-tree-rank-assignment} and Theorem~\ref{thm:reset-time}. Now the number of required epochs is a geometric random variable with constant expectation, so the total expected time is still $O(n)$.

In analyzing the state set, note that each role using $O(n)$ states. The $\settled$ agents have $O(n)$ states for $\rank$, the $\unsettled$ agents have $O(n)$ states for $\errorcount$, and the $\resetting$ agents have $O(\resetcountmax + \delaymax) = O(n)$ states.
The total state set is the sum of each of these disjoint roles, giving $O(n)$ total states.
\end{proof}

\begin{corollaryrep}
\silentlinearTimeStateProtocol\ takes $O(n \log n)$ time with high probability.
\end{corollaryrep}

\begin{proof}
By Theorem~\ref{thm:silent-linear-time}, \silentlinearTimeStateProtocol\ takes expected $O(n)$ time, so by Markov's inequality it takes $O(n)$ time with constant probability. Now consider epochs of fixed length $O(n)$ time, where an epoch is successful if we reach the unique stable configuration by the end of the epoch. If the epoch is not successful, then we can consider the final configuration as the initial configuration, because the protocol is self-stabilizing. Thus the analysis of Theorem~\ref{thm:silent-linear-time} holds for each epoch. The total number of epochs required is then a geometric random variable with constant success probability, which is $O(\log n)$ with high probability ($1 - O(1/n^c)$ for any desired $c$). Since these epochs were defined to use a fixed amount of $O(n)$ time, the high probability bound is $O(n \log n)$ time.
\end{proof}


\section{Logarithmic-time protocol}
\label{sec:log-time-protocol}

In this section, we show a protocol solving SSR, and thus SSLE, in optimal $O(\log n)$ expected time, using 
a ``quasi-exponential'' number of states: $\exp(O(n^{\log n} \cdot \log n))$.
%
Observation~\ref{obs:linear-time-lower-bound-silent} shows that to achieve sublinear time, the protocol necessarily must be non-silent: 
agents change states forever.

\subsection{Overview}

Intuitively,
\logTimeProtocol\ works as follows.
Each agent has a field \name, a bitstring of length $3\log_2 n$. The $n^3$ possible values ensure that if all agents pick a \name\ randomly,
with high probability, there are no collisions.
The set of all \name\ values in the population is propagated by epidemic in $O(\log n)$ time in a field called \Met\ 
(which has an exponential number of possible values).
Agents 
{update their \rank\ (a write-only output field) only when} 
their \Met\ field has size $n$;
in this case the agent's \rank\ is its \name's lexicographical order in the set \Met.

One source of error is that we can start in a configuration with a ``\emph{ghost name}'':
a name that is in the \Met\ set of some agent,
but that is not the \name\ of any agent.
If there are no collisions among actual \name's,
this error is easy to handle:
eventually we will have $|\Met| > n$,
indicating that there is a ghost name, 
triggering \propagateresetprotocol.\footnote{
    Eventually we will introduce a tree data structure that also has all the names. 
    However, it is necessary to keep a separate set \Met\ of names for the following reason.
    The set \Met\ is propagated in time $O(\log n)$, whereas in slower variants of our algorithm, the tree takes too long to collect the names.
    For example, in the $O(\sqrt{n})$ time (and uses less memory) variant, the tree takes time $\Omega(n)$ to populate with all names.
}

The main challenge is then to detect name collisions.
\logTimeProtocol\ calls a subroutine \detectcollision\ that detects whether two agents have the same \name.
If so, we call the same subroutine \propagateresetprotocol\ used in \silentlinearTimeStateProtocol,
now with $\delaymax = \Theta(\log n)$ rather than $\Theta(n)$ as in \silentlinearTimeStateProtocol.
Upon awakening from \propagateresetprotocol,
agents pick a new name randomly.
They use their dormant time,
while still in role \resetting,
but with $\resetcount=0$
while counting \delay\ down to 0,
to generate random bits to pick a new random name.

The bulk of the analysis is in devising an $O(\log n)$ time protocol implementing \detectcollision.
The rest of the protocol outside of \detectcollision\ is silent:
once the protocol stabilizes, no \name\ or \rank\ field changes.
Indeed, we can implement a silent protocol on top of this scheme if we are content with $\Theta(n)$ time:
\detectcollision\ can be implemented with the simple rule that checks whether the \name\ fields of the two interacting agents are equal,
i.e., direct collision detection.
The challenge, therefore, is in implementing \detectcollision\ in sublinear time by \emph{indirectly} detecting collisions, 
without requiring agents with the same name to meet directly.
By Observation~\ref{obs:linear-time-lower-bound-silent} 
any method of doing this will necessarily be non-silent.

The protocol \detectcollision\ is parameterized to give a tradeoff between stabilizing time and state complexity.
For instance, there is a $O(\sqrt{n})$ time protocol that uses a data structure with $kn$ bits for a parameter $k$, i.e., $2^{kn}$ possible values.
Of course, all of the schemes use at least exponential states, 
since the field \Met\ has $\approx n^{3n}$ possible values.
However, the faster schemes will use even more states than this,
and their analysis is more complex.
This is discussed in more detail in Section~\ref{subsec:fast-collision-detection}.

The proofs in Section~\ref{sec:log-time-top-level-proofs} shows this top-level protocol works as intended, once we have correct and efficient collision detection from \detectcollision.

\begin{protocol}[H]
\caption{\logTimeProtocol, for agent $a$ interacting with agent $b$. 
\\
\textbf{Fields:} $\roleself \in \{ \collecting, \resetting\}$, $\name \in \Q$
\\
If $\roleself = \collecting$, $\rank \in \{1,\dots,n\}$,
        $\Met \subseteq \Q, |\Met| \leq n$,
        other fields from \detectcollision
\\
If $\roleself = \resetting$, $\resetcount \in \{1,\ldots,\resetcountmax\}$, $\delay \in \{0, 1, \ldots, \delaymax = \Theta(\log n)\}$
}
\label{algo:lineartime_bounded}
\begin{algorithmic}[1]
    \If{$a.\role = b.\role = \collecting$}
    
        \If{\detectcollision($a$,$b$) or $\lvert a.\Met \cup b.\Met \rvert>n$}
        \label{algo:lineartime:detect-error}
        \State{$a.\role,b.\role \gets \resetting, 
                    \quad 
                    a.\resetcount,b.\resetcount \gets \resetcountmax$}
            \label{algo:lineartime:trigger-reset1}
        \Else
        \State $a.\Met,b.\Met \gets a.\Met \cup b.\Met$ 
        \qquad

        \If{$\lvert a.\Met \cup b.\Met \rvert = n$}
        \Comment{do not set rank until all names collected}
        \For{$i\in\{a,b\}$}
            \State{$i.\rank \gets$ lexicographic order of $i.\name$ in $\Met$ }
            \label{line:set-rank-lexicographic}
        \EndFor
        \EndIf
        \EndIf

    \Else
        \Comment{some agent is $\resetting$}
    \State execute \propagateresetprotocol($a$,$b$) 
       \For{$i \in \{a,b\}$ such that $|i.\resetcount| > 0$}
            \State{$i.\name \gets \varepsilon$}
            \Comment{clear names while propagating the reset signal}
        \EndFor 
        \For{$i \in \{a,b\}$ such that $i.\resetcount = 0$ and $|i.\name|<3\log_2 n$}
            \State{append a random bit to $i.\name$}
            \Comment{can be derandomized, see Section~\ref{sec:derandomization}}
            \label{line:random-bit}
        \EndFor
    \label{algo:lineartime:trigger-reset2}
    \EndIf
\end{algorithmic}
\end{protocol}

\begin{protocol}[H]
\caption{\resetprotocol(a)\ for \logTimeProtocol, for agent $a$.
\\
(Called in line~\ref{algo:execute-reset} of \propagateresetprotocol.)}
\begin{algorithmic}[1]
\opt{inline,append,strip}{
    \State $\roleself \gets \collecting$
    \State $\Met\gets\{\name\}$
}
\end{algorithmic}\label{algo:reset-log-time}
\end{protocol}



\begin{toappendix}
\subsection{Proofs for \logTimeProtocol}
\label{sec:log-time-top-level-proofs}
    
We will first prove a series of Lemmas about the behavior of \propagateresetprotocol, which will show that if $\detectcollision$ works as intended to detect collisions in $O(T_H)$ time, then $\logTimeProtocol$ will solve self-stabilizing ranking in $O(T_H)$ time.

We call a configuration \emph{non-colliding} if all agents have distinct names ($a.\name \neq b.\name$ for all $a,b\in \Agents$) and $|a.\name| = 3\log_2 n$ for all $a\in\Agents$. This ensures any $\resetting$ agents have generated enough random bits to have picked a new name, and all new names are unique. We first reason about awakening configurations, which we show are non-colliding with high probability.

\begin{lemmarep}
\label{lem:unique-names}
With high probability $1-O(1/n)$, from a partially triggered configuration, we reach an awakening, non-colliding configuration.
\end{lemmarep}

\begin{proof}
By Theorem~\ref{thm:reset-time}, we reach an awakening configuration after $O(\delaymax)$ time.
Also, for $\Theta(\delaymax)$ time, all agents will be dormant (with $\resetcount = 0$ and decrementing $\delay$). By standard Chernoff bounds for appropriate choice of constant $\delaymax = \Theta(\log n)$, all agents will be dormant for long enough to generate all $3\log_2 n$ bits of a new name, with high probability.

In this case, each agent has a uniform random name out of all $O(n^3)$ possible bit strings. The probability of any arbitrary pair of agents choosing the same name is $O(1/n^3)$, by union bound over all pairs of agents, the probability of name collision is $O(1/n)$. Thus with high probability $1-O(1/n)$, the awakening configuration we reach is non-colliding.
\end{proof}

Now we can appeal to the roll call process to show that from an awakening, non colliding configuration, we quickly get to unique ranks.

\begin{lemmarep}
\label{lem:unique-ranks-from-unique-names}
From an awakening, non-colliding configuration, we reach a configuration with unique ranks in $O(\log n)$ time with high probability $1-O(1/n)$.
\end{lemmarep}

\begin{proof}
The agents update their field $\Met$ by taking unions, so the process of all agents getting a complete $\Met$ is exactly the roll call process. By Lemma~\ref{lem:parallel:epidemics}, this finishes within $3 \ln n$ parallel time with high probability $1-O(1/n)$. Once every agent has all $n$ unique names in their $\Met$, they choose unique ranks by the lexicographic order of their name, in line~\ref{line:set-rank-lexicographic} of \logTimeProtocol.
\end{proof}

Note that if we were using simple direct interactions to detect name collisions, then this unique rank configuration would be silent and stable. For the actual algorithm, we must later prove a safety condition, that $\detectcollision$ will never falsely detect a collision in this stable configuration, shown as Lemma~\ref{lem:safety-from-woke-configurations}.

There are now two different errors that we must quickly detect. One is name collisions, which we will handle in the next section with $\detectcollision$. The other is ``ghost names''.
We say a configuration is \emph{ghostly} if some $\Met$ set contains a ghost name, i.e., if
\[
    \bigcup_{a\in\Agents}
    a.\Met
    \not\subseteq
    \{b.\name \mid b\in\Agents\}
    \enspace.
\]
If the configuration is non-colliding, then it is easy to detect ghost names, because the full set of names will be larger than $n$. Thus we can again appeal to the roll call process to show we quickly detect an error in this case.

\begin{lemmarep}\label{lem:ghostly}
    Starting from a ghostly and non-colliding configuration, we reach a partially triggered configuration after $O(\log n)$ time with high probability $1-O(1/n)$.
\end{lemmarep}

\begin{proof}
The agents update their field $\Met$ by taking unions, so the process of all agents getting a complete $\Met$ is exactly the roll call process. By Lemma~\ref{lem:parallel:epidemics}, this finishes within $3 \ln n$ parallel time with high probability $1-O(1/n)$. Because the configuration is ghostly and noncolliding, the total number of distinct names in the rosters is $>n$. Once some agent has $\Met > n$, they will become triggered by line~\ref{algo:lineartime:trigger-reset1} of \logTimeProtocol.
\end{proof}
\end{toappendix}

\subsection{Fast Collision Detection}
\label{subsec:fast-collision-detection}

In \logTimeProtocol, both \propagateresetprotocol\ and filling all agents' \Met\ take $O(\log n)$ time, so the time bottleneck is waiting to detect a name collision. If we simply wait for two agents with the same name to meet to detect a collision, this will take $\Theta(n)$ time in the worst case, which would give a $\Theta(n)$ time silent algorithm.

The goal of \detectcollision\ is to detect these names collisions in sublinear time. Because of the lower bound of Observation~\ref{obs:linear-time-lower-bound-silent}, this protocol must not be silent. 
\detectcollision\ will have to satisfy two conditions. In order to allow $O(\log n)$ time convergence, from any configuration with a name collision, some agent must detect this collision in $O(\log n)$ time to initiate \propagateresetprotocol. 
Second, to ensure the eventual ranked configuration is stable, it must satisfy a safety condition where from a configuration with unique names, 
no agent will ever think there is a name collision.\footnote{
    The initial configuration could have unique names, but with auxiliary data adversarially planted to mislead agents into believing there is a name collision, triggering a reset.
    So the actual safety condition is more subtle and involves unique-name configurations reachable only after a reset.
}

As a warm-up to the full $O(\log n)$-time protocol of \detectcollision, consider the following simpler $O(\sqrt{n})$-time protocol. 
Each agent keeps a dictionary keyed by names of other agents they have encountered in the population. Whenever agents $a$ and $b$ meet, they generate a random shared value \sync\ $\in \{1,\ldots,k\}$, which $a$ stores in its dictionary keyed by the name of $b$, and $b$ stores in its dictionary keyed by the name of $a$. If the two agents disagree on this \sync\ value at the beginning of an interaction, they declare a name collision.

From a configuration with two agents $a$ and $a'$ who share the same name, within $O(\sqrt{n})$ time, some agent $b$ will interact with both $a$ and $a'$ 
(assume $b$ first interacts with $a$, then $a'$). 
With probability $1 - \frac{1}{k}$, the sync value that $b$ generates with $a$ will disagree with the sync value that $a'$ has with $b$. Thus when $b$ then meets $a'$, it is able to detect a name collision.
Also note that from a configuration with unique names, an invariant is maintained that all pairs of agents agree on their corresponding \sync\ values, giving the required safety property.

The actual protocol \detectcollision\ is a generalization of this idea. The agents now store a more complicated data structure: a tree whose nodes are labelled by names. 
See Figure~\ref{fig:trees-example} for an example.
The root is labelled by the agent's own name, and every root-to-leaf path is \emph{simply labelled}, meaning that each node on the path contains unique names
(it is permitted for the same name to appear on multiple nodes in the tree, 
but neither of these nodes can be an ancestor of the other).
Each edge is labelled by a \sync\ value. 
The intuition is that these paths correspond to \emph{histories}: 
chains of interactions between agents, 
where the \sync\ values on the edges were generated by the interaction between that pair of agents. 
For instance, 
if $a$ has a path $a \xrightarrow{3} b \xrightarrow{5} c \xrightarrow{7} d$, 
the interpretation is that when $a$ last met $b$, 
$a$ and $b$ generated sync value 3, 
and in that interaction, 
$b$ told $a$ that when $b$ last met $c$, 
$b$ and $c$ generated sync value 5, 
and in that interaction, 
$c$ told $b$ that when $c$ last met $d$, 
$c$ and $d$ generated sync value 7.
In particular, it could be that $c$ and $d$ have interacted again, 
generating a different sync value than 7, before $a$ and $b$ interact,
but $b$ has not heard about that interaction.
See Fig.~\ref{fig:trees-example} for an example showing how this information is built up.

\begin{protocol}[H]
\caption{\detectcollision(a,b)\ for \logTimeProtocol, for agent $a, b$.
\\
\textbf{Fields:} \tree: depth \depth, root labelled \name, other nodes have $\node.\name\in\Met$. Edges have $\edge.\sync\in\{1,\ldots,\synccount = \Theta(n^2)\}$ and $\edge.\edgetimer\in\{0,\ldots,\edgetimercount\}$.
The parameter $T_H = \Theta(H \cdot n^{1/(H+1)})$ for $H = O(1)$ and $T_H = \Theta(\log n)$ for $H = \Theta(\log n)$ (we need $T_H = \Theta(\tau_{H+1})$ as in Lemmas~\ref{lem:bounded-epidemic-constant} and \ref{lem:bounded-epidemic-log-n}).
}
\begin{algorithmic}[1]
    \For{$(i,j)\in\{(a,b),(b,a)\}$}
        \For{every path $(i.e_1,\ldots,i.e_p)$ in $i.\tree$ with $i.e_1.\edgetimer, \ldots, i.e_p.\timer > 0$ and last node $v$ with $v.\name = j.\name$}
        \label{line:check-positive-timers}
        \Comment{All of $i$'s histories about $j$ that aren't outdated}
            \If{$\pathconsistencyprotocol(j,(i.e_1,\ldots,i.e_p))=\inconsistent$}
            \label{line:check-path-consistency}
                \State{Return \true}
                \Comment{collision detected}
            \EndIf
        \EndFor
    \EndFor
    \State{$x\gets$ chosen uniformly at random from $\{1,\ldots,\synccount\}$}
    \Comment{Choose new sync value}
    \For{$(i,j)\in\{(a,b),(b,a)\}$}
        \Comment{Update trees to share new information}
            \If{$i.\tree$ has node $v$ at depth 1 with $v.\name = j.\name$}
                \State{Remove the subtree rooted at $v$ from $i.\tree$}
            \EndIf
            \State{Add $j.\tree$ (to depth $H-1$) as a subtree of $i.\tree$ via new edge $e$ from the root}
            \label{line:copy-subtree}
            \State{$e.\sync \gets x$, $e.\edgetimer\gets \edgetimercount$}
    \EndFor
    \For{$i \in \{a,b\}$}
    \Comment{Keep the trees simply labelled}
        \State{remove from $i.\tree$ all subtrees with root labelled with $i.\name$}
    \EndFor
    \For{each edge $e$ in $a.\tree$ and $b.\tree$}
        \State{$e.\edgetimer\gets \max(e.\edgetimer - 1$,0)}
    \EndFor
    \State{Return \false}
    \Comment{no collision detected}
\end{algorithmic}\label{algo:detectcollision}
\end{protocol}

The $O(\sqrt n)$ time algorithm above can be thought of as a tree of depth 1, where each agent stores only the names and \sync\ values of the agents it has directly interacted with. The general algorithm has a tree of depth $H$, which allows agents to hear about other agents' \sync\ values through longer chains of interactions. 
In line~\ref{line:check-path-consistency} 
of \detectcollision, 
each agent checks any paths ending at the name of the other agent 
(the additional fields $\edge.\timer$ are a technicality to handle certain adversarial initial conditions, see Lemma~\ref{lem:safety-from-all-configurations}). 
Intuitively, they require the other agent to show information that is logically consistent with this path, 
formalized in the conditions of \pathconsistencyprotocol. 
To detect a name collision between agents $a$ and $a'$, 
it will now suffice for some agent $b$ to have heard about agent $a$ before meeting $a'$. 
With constant probability, 
the duplicate agent $a'$ will not have any \sync\ values that are logically consistent with this path, 
and $b$ will declare a collision. 
Allowing longer paths decreases the time it takes for this information to travel between $a$ and $a'$. 
Because the paths that spread information in the epidemic process have length at most $O(\log n)$ with high probability (see Lemma~\ref{lem:bounded-epidemic-log-n}), 
once we take $H=O(\log n)$, 
in the $O(\log n)$ time it would take for an epidemic starting at $a$ to reach $a'$, 
some agent will detect a collision in this way.

\begin{protocol}[H]
\caption{\pathconsistencyprotocol(j,P)\ for \detectcollision, for agent $j$ verifying path $P=(i.e_1,\ldots,i.e_p)$}
\begin{algorithmic}[1]
    \State{$q \gets \min\{q' \mid \exists (j.e_p,\ldots,j.e_{q'}) $ in $j.\tree \}$}
    \Comment{$(j.e_p,\ldots,j.e_{q})$ is a root-to-leaf path}
    \For{edge $j.e \in(j.e_p,\ldots,j.e_q)$}
        \If{$j.e.\sync = i.e.\sync$}
            \State{Return $\true$}
        \EndIf
    \EndFor
    \State{Return \inconsistent}
\end{algorithmic}\label{algo:pathconsistency}
\end{protocol}

\newcommand{\rootoftree}[1]{
  \begin{forest}
  for tree={circle,draw, l sep=20pt} [#1]
  \end{forest}
}
\newcommand{\newinfo}[1]{{\color{red} #1}}

\begin{figure}


\begin{tabular}{|c|c|c|c|}
    \hline
     $a$'s tree & $b$'s tree & $c$'s tree & $d$'s tree  
     \\ \hline
     \rootoftree{a} & \rootoftree{b} & \rootoftree{c} & \rootoftree{d}
     \\ \hline 
     \multicolumn{4}{l}{$a$-$b$ interact; generate sync value $1$:}
     \\ \hline 
\begin{forest}
for tree={circle,draw, l sep=20pt}
[a
    [b, edge label={node[midway,left] {\newinfo{1}}}    ]
]
\end{forest}
& 
\begin{forest}
for tree={circle,draw, l sep=20pt}
[b
    [a, edge label={node[midway,left] {\newinfo{1}}}    ]
]
\end{forest}
& \rootoftree{c} & \rootoftree{d}
     \\ \hline 
     \multicolumn{4}{l}{$b$-$c$ interact; generate sync value $2$:}
     \\ \hline 
\begin{forest}
for tree={circle,draw, l sep=20pt}
[a
    [b, edge label={node[midway,left] {1}}    ]
]
\end{forest}
& 
\begin{forest}
for tree={circle,draw, l sep=20pt}
[b
    [a, edge label={node[midway,left] {1}}  ] 
    [c, edge label={node[midway,right] {\newinfo{2}}}    ]
]
\end{forest}
 & 
\begin{forest}
for tree={circle,draw, l sep=20pt}
[c
    [b, edge label={node[midway,left] {\newinfo{2}}}    
        [a, edge label={node[midway,left] {\newinfo{1}}}]
    ]
]
\end{forest}
 & \rootoftree{d}
     \\ \hline 
     \multicolumn{4}{l}{$c$-$d$ interact; generate sync value $3$:}
     \\ \hline 
\begin{forest}
for tree={circle,draw, l sep=20pt}
[a
    [b, edge label={node[midway,left] {1}}    ]
]
\end{forest}
& 
\begin{forest}
for tree={circle,draw, l sep=20pt}
[b
    [a, edge label={node[midway,left] {1}}  ] 
    [c, edge label={node[midway,right] {2}}    ]
]
\end{forest}
& 
\begin{forest}
for tree={circle,draw, l sep=20pt}
[c
    [b, edge label={node[midway,left] {2}}    
        [a, edge label={node[midway,left] {1}}]
    ]
    [d, edge label={node[midway,right] {\newinfo{3}}}]
]
\end{forest}
&
\begin{forest}
for tree={circle,draw, l sep=20pt}
[d
  [c, edge label={node[midway,left] {\newinfo{3}}}
    [b, edge label={node[midway,left] {\newinfo{2}}}
        [a, edge label={node[midway,left] {\newinfo{1}}}  ] 
    ]
]]
\end{forest}
     \\ \hline 
\end{tabular}
\hspace{-0.05cm}
\ifarticle
    \rule[-8.7cm]{0.05cm}{17.5cm} 
\else
    \rule[-7.5cm]{0.05cm}{15cm} 
\fi
\hspace{0.05cm}
\begin{tabular}{|c|c|c|c|}
    \hline
     $a$'s tree & $b$'s tree & $c$'s tree & $d$'s tree  
     \\ \hline
     \rootoftree{a} & \rootoftree{b} & \rootoftree{c} & \rootoftree{d}
     \\ \hline 
     \multicolumn{4}{l}{$a$-$b$ interact; generate sync value $1$:}
     \\ \hline 
\begin{forest}
for tree={circle,draw, l sep=20pt}
[a
    [b, edge label={node[midway,left] {\newinfo{1}}}    ]
]
\end{forest}
& 
\begin{forest}
for tree={circle,draw, l sep=20pt}
[b
    [a, edge label={node[midway,left] {\newinfo{1}}}    ]
]
\end{forest}
& \rootoftree{c} & \rootoftree{d}
     \\ \hline 
     \multicolumn{4}{l}{$b$-$c$ interact; generate sync value $2$:}
     \\ \hline 
\begin{forest}
for tree={circle,draw, l sep=20pt}
[a
    [b, edge label={node[midway,left] {1}}    ]
]
\end{forest}
& 
\begin{forest}
for tree={circle,draw, l sep=20pt}
[b
    [a, edge label={node[midway,left] {1}}  ] 
    [c, edge label={node[midway,right] {\newinfo{2}}}    ]
]
\end{forest}
 & 
\begin{forest}
for tree={circle,draw, l sep=20pt}
[c
    [b, edge label={node[midway,left] {\newinfo{2}}}    
        [a, edge label={node[midway,left] {\newinfo{1}}}]
    ]
]
\end{forest}
 & \rootoftree{d}
     \\ \hline 
     \multicolumn{4}{l}{$a$-$b$ interact; generate sync value $7$:}
     \\ \hline 
\begin{forest}
for tree={circle,draw, l sep=20pt}
[a
    [b, edge label={node[midway,left] {\newinfo{7}}}    
            [c, edge label={node[midway,left] {\newinfo{2}}}    ]
    ]
]
\end{forest}
& 
\begin{forest}
for tree={circle,draw, l sep=20pt}
[b
    [a, edge label={node[midway,left] {\newinfo{7}}}  ] 
    [c, edge label={node[midway,right] {2}}    ]
]
\end{forest} & 
\begin{forest}
for tree={circle,draw, l sep=20pt}
[c
    [b, edge label={node[midway,left] {2}}    
        [a, edge label={node[midway,left] {1}}]
    ]
]
\end{forest}
& \rootoftree{d}
     \\ \hline 
     \multicolumn{4}{l}{$c$-$d$ interact; generate sync value $3$:}
     \\ \hline 
\begin{forest}
for tree={circle,draw, l sep=20pt}
[a
    [b, edge label={node[midway,left] {7}}    
        [c, edge label={node[midway,left] {2}}    ]
    ]
]
\end{forest}
& 
\begin{forest}
for tree={circle,draw, l sep=20pt}
[b
    [a, edge label={node[midway,left] {7}}  ] 
    [c, edge label={node[midway,right] {2}}    ]
]
\end{forest}
& 
\begin{forest}
for tree={circle,draw, l sep=20pt}
[c
    [b, edge label={node[midway,left] {2}}    
        [a, edge label={node[midway,left] {1}}]
    ]
    [d, edge label={node[midway,right] {\newinfo{3}}}]
]
\end{forest}
&
\begin{forest}
for tree={circle,draw, l sep=20pt}
[d
  [c, edge label={node[midway,left] {\newinfo{3}}}
    [b, edge label={node[midway,left] {\newinfo{2}}}
        [a, edge label={node[midway,left] {\newinfo{1}}}  ] 
    ]
]]
\end{forest}
     \\ \hline 
\end{tabular}

    \caption{ 
    \ifarticle
        \footnotesize 
    \fi
    Example executions building up trees in agents, starting from a ``clean'' configuration with singleton trees. 
    {\color{red} Red} sync values are those that are newly generated or communicated in the preceding interaction.
    As an example of how agents check for consistency, when $a$ and $d$ interact, before updating their trees, $d$ checks any paths $p$ that end with $a$ (here there's just one, $d \xrightarrow{3} c \xrightarrow{2} b \xrightarrow{1} a$) 
    against $a$'s corresponding path, which is $a$'s longest reversed suffix of $p$.
    In the example on the left, $a$'s reverse suffix is $a \xrightarrow{1} b$, with just a single edge that matches the final sync value in this path $p$, so \pathconsistencyprotocol\ will return \true\ after checking the first edge.
    In the example on the right, $a$'s reverse suffix is $a \xrightarrow{7} b \xrightarrow{2} c$. The first edge $a \xrightarrow{7} b$ does not match $d$'s tree, because agents $a$ and $b$ generated the new sync value $7$ in a later interaction. However, in that interaction, $a$ added the edge $b \xrightarrow{2} c$, hearing about the $b$-$c$ interaction with sync value $2$ that matches the path in $d$'s tree. Now \pathconsistencyprotocol\ will return \true\ after checking the second edge.
    }
    \label{fig:trees-example}
\end{figure}

\begin{toappendix}
\subsection{Proofs for \detectcollision}
\label{sec:detect-collision-proofs}

We first argue the safety condition, that configurations with unique names will not return false collisions. We start by arguing about awakening, non-colliding configuration, where every agent will execute \resetprotocol\ and initialize a \tree\ only containing the root with their unique value of \name. Here every value $\node.\name$ in an agent's tree will refer to the name of some unique agent. We will sometimes abuse notation and use $a$ to refer to both an agent and the unique name held by that agent.

We now show that from this initialized setting, no collisions are ever detected, ie. there are no ``false positives''.
See Figure~\ref{fig:trees-example} for an example of why collisions are not detected started from an awakening configuration.

\begin{lemmarep}[Safety after a correct reset]
\label{lem:safety-from-woke-configurations}
An awakening, non-colliding configuration is \emph{safe}:
\detectcollision\ will return \false\ in all future reachable configurations.
\end{lemmarep}


\begin{proof}
Observe that because the configuration is non-colliding, every name used to label the nodes in all trees uniquely correspond to one agent in the population. Also, because the configuration is awakening, every agent will at some point start with an empty tree. Thus every edge $e = (i,j)$ in the tree of any agent corresponds to some interaction between agents $i$ and $j$, who randomly generated the sync value $x = e.\sync$.

Now consider any time we run $\pathconsistencyprotocol(j,P)$, for a simply labelled path $P=(i.e_1,\ldots,i.e_p)$ in the tree of agent $i$, checking this path against agent $j$. 
We prove this returns \true, arguing by induction on the length $p$ of the path.

Base case for paths of length $p=1$: $i.e_1$ corresponds to the last interaction between $i$ and $j$, where they agreed on a sync value $i.e_1.\sync = j.e_1.\sync$. Thus $j$ will still have a matching edge $j.e_1$ in its tree, with the same sync value, causing it to return \true.

Induction step for paths of length $p\geq 2$: Let $i.e_1=(i,a)$, where $a$ is the unique agent with the other name on this edge, and $i.e_p=(b,j)$, where $b$ is the unique agent with the other name on this edge. (Note that for $p=2$, $a = b$.) Consider which of the two pairs $(i,a)$ and $(b,j)$ interacted most recently.

In the first case, assume $(i,a)$ had the more recent interaction than $(b,j)$, at some time $T_{ia}$ before the current time $T_{ij}$. The path $P$ in $i.\tree$ goes from the root to node $a$, so $a$ has a path $P'=(a.e_2, \ldots, a.e_p)$ ending at $j$ that is a length-$(p-1)$ suffix of $P$, which gets copied to $i.\tree$ in this interaction (line~\ref{line:copy-subtree} in \detectcollision).
Now consider a hypothetical interaction scheduled instead between $a$ and $j$ at time $T_{ia}$, when $a$ has the path $P'$ in $a.\tree$.
  By the induction hypothesis,
  in this hypothetical $(a,j)$ interaction, $\pathconsistencyprotocol(j,P')$ would return \true, so $j$ has some reverse suffix $(j.e_p, \ldots, j.e_q)$ with a matching edge $j.e$ where $j.e.\sync = a.e.\sync = i.e.\sync$. Also, because agents $b$ and $j$ do not interact between times $T_{ia}$ and $T_{ij}$, this path will not change in $j.\tree$. Thus $\pathconsistencyprotocol(j,P)$ will also return \true\ at the 
  time $T_{ij}$.

The second case is mostly symmetric.
Assume $(b,j)$ was more recent than $(i,a)$, at some time $T_{bj}$ before the current time $T_{ij}$. Now consider a hypothetical interaction scheduled instead between $i$ and $b$ at time $T_{bj}$. $i$ has the prefix $P'=(i.e_1,\ldots,i.e_{p-1})$ ending at $b$, and $\pathconsistencyprotocol(b,P')$ would return \true, so $b$ has some reverse suffix $S'=(b.e_{p-1}, \ldots, b.e_q)$ with a matching edge $b.e$ where $b.e.\sync = i.e.\sync$. Also, because agents $i$ and $a$ do not interact between times $T_{bj}$ and $T_{ij}$, the path $P'$ will not change in $i.\tree$. In the actual interaction between $b$ and $j$ at time $T$, this reverse suffix $S'$ gets copied to $j.\tree$, so $j$ has a suffix $S = (j.e_p,j.e_{p-1},\ldots,j.e_q)$, which does not change because $j$ has no further interactions with $b$. Thus $\pathconsistencyprotocol(j,P)$ will also return \true\ at the current time, via the edge $e$ in this suffix $S$.
\end{proof}




In a non-self-stabilizing setting, where we could assume the agents are initialized with empty trees, Lemma~\ref{lem:safety-from-woke-configurations} would have been sufficient. However, there is the possibility of adversarial initial conditions that do not have any collisions, but where agents are initialized with inconsistent trees that could falsely detect a collision in a later interaction. It could take up to linear time for the pair of agents with inconsistent data to meet. 

In order to circumvent this issue, we add the field $\edge.\timer$ to each edge. From such an adversarial initial condition, once $\edge.\timer$ has counted down to $0$ for each edge coming from the initial condition and not a true interaction, then by the check on line 2 of \detectcollision, no paths from the initial data will ever get checked. Note that paths whose timers have hit $0$ can still be used as verification in \pathconsistencyprotocol, 
which is essential to ensure correctness, but only if the reverse path still has positive timers.

The following lemma shows that starting from any non-colliding configuration, once all initial timers have run out, either we will have detected a collision, or we will have reached a situation where Lemma~\ref{lem:safety-from-woke-configurations} can apply to give safety in all reachable configurations. It reasons about $O(T_H)$ time, where $T_H$ is the parameter giving the maximum value of the $\edge.\timer$.

\begin{lemmarep}[Safety from all configurations]
\label{lem:safety-from-all-configurations}
Starting from any non-colliding configuration $C$, after $O(T_H)$ time, with probability $1-O(1/n)$, either some agent has become triggered, or we reach a configuration $C_0$ that is \emph{safe}: 
\detectcollision\ will return \false\ in all future reachable configurations.
\end{lemmarep}

\begin{proof}
First we argue that after $O(T_H)$ time, with high probability, all timers that correspond to edges from the initial configuration have reached $0$. 
Define $E$ as the set of all edges in the trees of any agents in the initial configuration. For each agent $a$, define $m(a) = \max\{e.\edgetimer:e \in E \cap a.\tree\} $, i.e. the largest timer value corresponding to an edge that agent has that came from the initial configuration, so is possibly corrupted initial data. Note that because agents share edges in their interaction, and also all edge timers decrement, these counts $m$ update via $m(a), m(b) \gets \max(m(a)-1,m(b)-1, 0)$. So $m$ is a ``propagating'' variable of the exact type analyzed in \propagateresetprotocol. By Lemma~\ref{lem:fully-propagating-to-fully-dormant}, setting $R_{\max} = T_H$, we have $\max_a m(a) = 0$ with high probability in time $O(T_H)$.

Note that before these timers have all run out, it is possible for some agent to return \true\ from \detectcollision, and thus become triggered. Also, if the configuration is ghostly, then some agent will become triggered in $O(\log n)$ time with probability $1-O(1/n)$ by Lemma~\ref{lem:ghostly} (since the configuration is non-colliding), so we can now assume there are no ghost names. But if no agent has become triggered, then after $O(T_H)$ time we will reach a configuration $C_0$ where every remaining edge that was originally present in $C$ has $\edge.\timer = 0$. 

Now we wish to show that 
$C_0$ is safe.
Consider a coupled configuration $C_\text{empty}$, which is identical to $C_0$, except every tree only contains the root, 
i.e., is in the state just after executing \resetprotocol. 
Let $D'$ be the configuration where we follow the exact same execution that reached $D$ starting from configuration $C_0$.
We start by arguing that $C_\text{empty}$ is safe, then use this to argue that $C_0$ is safe.

Note that $C_\text{empty}$ is reachable from an awakening and non-colliding configuration (if all agents immediately executed \resetprotocol\ before future interactions). 
Thus we can apply Lemma~\ref{lem:safety-from-woke-configurations} to show \detectcollision\ will return \false\ from every interaction possible from $D'$, 
i.e., $C_\text{empty}$ is safe.
It remains to argue that $C_0$ is safe.

Now consider some interaction between agents $i$ and $j$ in configuration $D$, where they then instantiate \pathconsistencyprotocol\ on some path $P$ in $i.\tree$. In this case, every edge in $P$ has $\edge.\timer > 0$, so these all correspond to interactions that happened in the execution sequence from $C_0$. Because \pathconsistencyprotocol\ returns \true\ in the coupled configuration $D'$, there must be some matching reverse path in $j.\tree$. This same path will also be present in $D$, so \pathconsistencyprotocol\ also returns \true,
whence $C_0$ is safe.
\end{proof}

\begin{lemmarep}[Fast Detection]
\label{lem:fast-collision-detection}
If the configuration is colliding, then \detectcollision\ returns \true\ for some pair of agents in expected $O(T_H)$ time. When $H=\Theta(\log n)$, it also takes $O(T_H)$ with high probability $1-O(1/n)$.
\end{lemmarep}

\begin{proof}
Let $a, a'$ be two agents with $a.\name = a'.\name$. Consider the bounded epidemic process with agent $a$ as the source and $a'$ as the target, and time $\tau_{H+1}$ as defined in Lemma~\ref{lem:bounded-epidemic-constant}. At exactly time $\tau_{H+1}$, there will first be a sequence of interacting agents $P = (a,x_1,x_2,\ldots,x_h,a')$ with $h \leq H$, where $a$ interacted with $x_1$, then $x_1$ interacted with $x_2$, etc. 
Let $b = x_h$ be the agent that interacted with $a'$.
In the case where $H=O(1)$, Lemma~\ref{lem:bounded-epidemic-constant} gives $\IE[\tau_{H+1}] = O(Hn^{1/(H+1)}) = O(T_H)$, so $\tau_{H+1} = O(Hn^{1/(H+1)})$ with constant probability by Markov's inequality. In the case where $H = \Theta(\log n)$, we use Lemma~\ref{lem:bounded-epidemic-log-n} to instead say $\tau_{H+1} = O(\log n) = O(T_H)$ with high probability $1-O(1/n^2)$.

Let $b$ be the agent that interacts with $a'$ on this path. Then there is a sequence of interacting agents $Q = (a,x_1,\ldots,x_{h-1},b)$, ie. the prefix of $P$ without the final agent $a'$. There are $h \leq H$ agents before $b$, which is at most the maximum depth of $b.\tree$.
Because of this history of interactions, 
$b.\tree$ will contain a node labelled $a$ at depth $h$, whose edges have sync values corresponding to the various interactions in $Q$,
i.e., $b$'s tree has a root-to-leaf path $Q_r = (b,x_{h-1},\ldots,x_1,a)$ that is the reverse of $Q$.

We will now justify that all edges in $Q_r$ have $\edge.\timer > 0$, which will follow because we have only waited $O(T_H)$ time.
As the sequence of interacting agents $P$ grows from $a$ to $a'$, we consider the number of interactions from the agent at the current front of the sequence. This will give the number of times the edges on the eventual path $Q_r$ in $b.\tree$ have decremented $\edge.\timer$.
By standard Chernoff bounds, there will be at most $O(T_H)$ of these interactions within $O(T_H)$ parallel time with very high probability $1 - \exp(-\Theta(n))$. Since the initial value of $\edge.\timer = T_H$, it follows that all edges for path $P$ in $b.\tree$ still have positive timers. Thus by line~\ref{line:check-positive-timers} in \detectcollision, in the interaction between $b$ and $a'$, they will instantiate \pathconsistencyprotocol\ with this path $Q_r$ in $b.\tree$.

$a'$ has not interacted with any of the agents $a,x_1,\ldots,x_{h-1}$, so it has not had any interactions where it would learn any sync values of the path. Thus for each edge, the probability of $a'$ having a matching sync value is at most $\frac{1}{\synccount} = O\left( \frac{1}{n^2} \right)$. Then taking the union bound over the whole path of length at most $O(\log n)$, $a'$ does not have any matching sync values with high probability $1-O(\log n / n^2)$. So with high probability, \pathconsistencyprotocol\ returns \inconsistent, and \detectcollision\ returns \true\ as desired.

In the case where $H = O(1)$, the probability of all required events was at least some constant. Thus, we repeat the argument until \detectcollision\ returns \true, an expected constant number of times, to conclude that \detectcollision\ returns \true\ in expected $O(T_H)$ time. In the case where $H=\Theta(\log n)$, we have stronger high probability $1-O(1/n)$ for all required events, so we can in addition conclude that \detectcollision\ returns \true\ in $O(T_H)$ time with high probability $1-O(1/n)$.
\end{proof}
\end{toappendix}

Section~\ref{sec:detect-collision-proofs} proves that \detectcollision\ works in $O(T_H)$ time, and also satisfies required safety conditions that ensure there are no ``false positives'' where collisions are detected from configurations with unique names. These results will let us prove the main theorem about the behavior of \logTimeProtocol:

\begin{theoremrep}
\label{thm:log-time}
\logTimeProtocol\ uses $\exp(O(n^H) \log n)$ states.

When $H = O(1)$,
\logTimeProtocol\ solves self-stabilizing ranking in expected $O(H\cdot n^{1/(H+1)})$ time, and $O(H\cdot \log n \cdot n^{1/(H+1)})$ time with high probability $1-O(1/n)$.

When $H = \Theta(\log n)$,
\logTimeProtocol\ solves self-stabilizing ranking in time $O(\log n)$, in expectation and with high probability $1-O(1/n)$.
\end{theoremrep}

\begin{proof}
We first count the number of states by counting the number of bits each agent must store. The main memory cost comes from the field $\tree$, which has depth $H$, and each node can have at most $n$ children, so will have $O(n^H)$ nodes. Each node uses $O(\log n)$ bits for $\node.\name$, and each edge uses $O(\log n)$ bits for $\edge.\sync$ and $\edge.\timer$. Thus the tree uses $O(n^H \log n)$ bits. Compared to this, all other fields in the protocol are negligible, so it follows that the protocol uses $O(n^H \log n)$ bits, or $\exp(O(n^H) \log n)$ states.

We now argue that \logTimeProtocol\ solves self-stabilizing ranking. By Corollary~\ref{cor:reset-fully-computing}, we only have to consider initial configurations that are fully computing or awakening. We first consider fully computing configurations that are colliding. Here by Lemma~\ref{lem:fast-collision-detection}, the configuration becomes triggered in $O(T_H)$ time. When $H = O(1)$, this is in expectation (and thus with constant probability by Markov's inequality). When $H = \Theta(\log n)$, this is with high probability $1-O(1/n)$. We next consider fully computing configurations which are non-colliding. By Lemma~\ref{lem:safety-from-all-configurations}, after $O(T_H)$ time we reach either a partially triggered configuration, or a safe configuration. In the case where we reach a safe configuration, there were no ghost names, so every agent has $|\Met| = n$ and will have unique ranks based on the lexicographic ordering of $\Met$. Thus we have a stable ranked configuration.

We finally argue about awakening configurations. By Lemma~\ref{lem:unique-names}, this awakening configuration is non-colliding with high probability $1-O(1/n)$. In this case, by Lemma~\ref{lem:unique-ranks-from-unique-names}, we reach a configuration with unique ranks in $O(\log n)$ time, which is again a stable ranked configuration, since by Lemma~\ref{lem:safety-from-woke-configurations} this configuration is also safe.

For the time bounds, when $H = \Theta(\log n)$, we have $T_H = O(\log n)$, and all probability bounds were high probability $1-O(1/n)$, so we use $O(\log n)$ time in expectation and with high probability.

When $H = O(1)$, we have $T_H = O(H\cdot n^{1/(H+1)})$, and we stabilize in $O(T_H)$ time with constant probability. Then we can consider ``epochs'' of $O(T_H)$ time, where we declare an epoch successful if it stabilizes. Each epoch has a constant probability of being successful, so the expected number of required epochs is constant, giving an expected $O(H\cdot n^{1/(H+1)})$. With high probability $1-O(1/n)$, the required number of epochs is $O(\log n)$, giving time $O(H\cdot \log n \cdot n^{1/(H+1)})$ with high probability.
\end{proof}

\begin{toappendix}

\section{Derandomization of Protocols}\label{sec:derandomization}

\newcommand{\alg}{\mathsf{Alg}}
\newcommand{\f}{\mathsf{Flip}}

Note that our model as defined allowed random transitions. However, this was simply for ease of presentation, and the randomness can be simulated through standard ``synthetic coin'' techniques that exploit the randomness of the scheduler.

We only used randomness in the \resetprotocol\ for \logTimeProtocol, to generate a name uniformly from the set $\{0,1\}^{3 \log^2 n}$.
We show one approach for how an agent can collect $O(\log n)$ random bits to generate this name.

This approach was inspired by a similar technique due to 
Sudo, Ooshita, Izumi, Kakugawa, and Masuzawa~\cite{DBLP:journals/corr/abs-1812-11309},
but substitutes their ``space multiplexing'' 
(splitting the population into two approximately equal-size subpopulations $A$ and $F$,
which is not clear how to implement in a self-stabilizing manner)
with ``time-multiplexing''.
On each interaction the agent switches between two roles:
``normal algorithm'' role ($\alg$), 
and ``coin flip'' role ($\f$).
When an agent needs a random bit, 
it waits until it is role $\alg$ and its partner is role $\f$.
If $\alg$ is the initiator, this represents heads, and if $\alg$ is the responder, this represents tails.
This decouples any dependence of the coin flips on each other or on the state of the agent being interacted with. 
It also incurs an expected slowdown of factor only $1/4$ per bit 
(since each agent requiring a random bit waits expected 4 interactions until it is in role $\alg$ and the other is in role $\f$). 
Thus, by the Chernoff bound, with high probability, the actual slowdown over all $O(\log n)$ bits is at most of factor $1/8$. 

Thus the agents can become inactive during \resetprotocol\ for the $O(\log n)$ interactions it takes to generate enough random bits to create a new name.

Our constructions here relied on initiator / responder asymmetry as the source of randomness. 
It would also be possible to avoid this capability and use a symmetric synthetic coin technique as described in \cite{DBLP:conf/soda/AlistarhAEGR17}.
In that case, our protocols would be almost entirely symmetric, with one exception: 
the slow leader election $L,L \to L,F$ required as part of \resetprotocol\ used in \silentlinearTimeStateProtocol.
This line itself could use the symmetric synthetic-coin to be simulated with symmetric transitions, leading to entirely symmetric protocols.

\end{toappendix}

\section{Conclusion and Perspectives}
\label{sec:conclusion}

For the first time, we addressed time-space trade-offs of self-stabilizing leader election and ranking in population protocols over complete graphs. 
We emphasize that solving these problems, while ensuring such a strong form of fault-tolerance, 
necessitates linear states and strong nonuniformity (Theorem \ref{thm:n-state-lower-bound}). 
Other forms of ``strong’’ fault-tolerance, such as 
Byzantine-tolerance
\cite{DBLP:conf/icalp/GuerraouiR09} or loosely-stabilizing leader election with \emph{exponential holding time} 
(a period of time where a unique leader persists after stabilization) \cite{DBLP:conf/sirocco/Izumi15,DBLP:journals/tcs/SudoOKMDL20}, similarly necessitate $\Omega(n)$ states. 
By contrast, a sublinear number of states suffices for many non-fault-tolerant protocols (cf.~\cite{DBLP:journals/sigact/AlistarhG18}) and weaker forms of tolerance, 
such as loosely-stabilizing leader election with \emph{polynomial} holding time \cite{DBLP:journals/tcs/SudoOKMDL20} or tolerance to a \emph{constant} number of crashes and transient faults \cite{DBLP:conf/dcoss/Delporte-GalletFGR06}.\footnote{
Recall that self-stabilization tolerates any number of transient faults.
}


To conclude, we propose several perspectives.


\ifarticle
    \paragraph{Time/space tradeoffs.}
\else
    \noindent{\textbf{Time/space tradeoffs.}} 
\fi
It is open to find a subexponential-state sublinear-time self-stabilizing ranking protocol.
Observation~\ref{obs:linear-time-lower-bound-silent} states that any sublinear time SSR protocol is not silent.
\logTimeProtocol\ is non-silent because it perpetually passes around information about agents' recent interactions with each other,
as a way to detect name collisions without requiring the agents with equal names to meet directly.
Even when limiting the tree of interactions to depth 1, 
this results in an exponential number of states,
since each agent must maintain a value to associate to every other agent in the population.
Thus, a subexponential-state protocol 
(if based upon fast collision detection) 
would somehow need to embed enough information in each agent to enable fast collision detection,
while somehow allowing the agent to forget ``most'' of the information about its interactions. 
Furthermore, our strategy of using the set $\Met$ of all names to go from unique names to unique ranks fundamentally requires exponential states.

\ifarticle
    \paragraph{Ranking vs.~leader election.}
\else
    \noindent{\textbf{Ranking vs.~leader election.}} 
\fi
Ranking implies leader election (``automatically''), 
but the converse does not hold.
In the \emph{initialized} case where we can specify an initial state for each agent,
it is possible to elect a leader without ranking, 
using the single transition $\ell,\ell \to \ell,f$
(using too few states for the ranking problem even to be definable).
Though any \emph{self-stabilizing} protocol for leader election must use at least $n$ states~\cite{cai2012prove} (Theorem \ref{thm:n-state-lower-bound} here),
it is not the case that any SSLE protocol implicitly 
solves the ranking problem. 
\opt{inline,append}{
    (See Observation~\ref{obs:leader-election-without-ranking}.)
}
It would be interesting to discover an SSLE algorithm that is more efficient than our examples because it does not also solve ranking.

\ifarticle
    \paragraph{Initialized ranking.}
\else
    \noindent{\textbf{Initialized ranking.}} 
\fi
In the other direction, consider the ranking problem in a non-self-stabilizing setting. Without the constraint of self-stabilization, there is no longer the issue of ghost names.
Compared to self-stabilization,
it may be easier to find an \emph{initialized} ranking protocol that still uses polylogarithmic time, 
but only polynomial states.

\ifarticle
    \paragraph{Initialized collision detection.}
\else
    \noindent{\textbf{Initialized collision detection.}} 
\fi
The core difficulty of \logTimeProtocol\ is \emph{collision detection}:
discovering that two agents have been assigned the same name without waiting $\Theta(n)$ time for them to interact directly.
It would be interesting to study this problem in the (non-self-stabilizing) setting where an adversary assigns read-only names to each agent,
but the read/write memory can be initialized to the same state for each agent.
Can a name collision be detected in sublinear time and sub-exponential states?



\newcommand{\h}{\texttt{h}}
\newcommand{\tail}{\texttt{t}}

\section*{Acknowledgement}
We warmly thank anonymous reviewers for their detailed comments, which have improved the paper greatly.
Doty and Severson were supported by NSF award 1900931 and CAREER award 1844976.
Ho-Lin and Hsueh-Ping were supported by MOST (Taiwan) grant number 107-2221-E-002-031-MY3.
Nowak was supported by the CNRS project ABIDE.

\bibliographystyle{plain}
\bibliography{ref}




\end{document}